\providecommand{\keywords}[1]
{
  \small	
  \textbf{\textit{Keywords---}} #1
}
\title{ \textbf{Hidden Markov and semi-Markov models} \\
When and why are these models useful for classifying states in time series data?
	}
\author[1,2]{Sofia Ruiz-Suarez}
\author[3,4]{Vianey Leos-Barajas}
\author[1]{Juan Manuel Morales}
\affil[1]{INIBIOMA (CONICET-Universidad Nacional del Comahue), Quintral 1250, Bariloche,
Rio Negro, Argentina}
\affil[2]{Universidad de Rosario, Facultad de Ciencias Econ\'{o}micas, Bv. Oro\~{n}o 1261, Rosario,Argentina}
\affil[3]{Department of Statistical Sciences, University of Toronto, 700 University Ave, Toronto, ON, M5G 1Z5, Canada}
\affil[4]{School of the Environment, University of Toronto, 33 Wilcocks St, Toronto, ON, M5S 3E8, Canada}
\date{}
\begin{document}
\small
\thispagestyle{empty}
\pagenumbering{gobble}
\maketitle
\newpage
\begin{abstract}

Hidden Markov models (HMMs) and their extensions have proven to be powerful tools for classification of observations that stem from systems with temporal dependence as they take into account that observations close in time are likely generated from the same state (i.e.\ class). When information on the classes of the observations is available in advanced, supervised methods can be applied.
In this paper, we provide details for the implementation of four models for classification in a supervised learning context: HMMs, hidden semi-Markov models (HSMMs), autoregressive-HMMs, and autoregressive-HSMMs. Using simulations, we study the classification performance under various degrees of model misspecification to characterize when it would be important to extend a basic HMM to an HSMM. As an application of these techniques we use the models to classify accelerometer data from Merino sheep to distinguish between four different behaviors of interest. In particular in the field of movement ecology, collection of fine-scale animal movement data over time to identify behavioral states has become ubiquitous, necessitating models that can account for the dependence structure in the data. We demonstrate that when the aim is to conduct classification, various degrees of model misspecification of the proposed model may not impede good classification performance unless there is high overlap between the state-dependent distributions, that is, unless the observation distributions of the different states are difficult to differentiate. 

\end{abstract}
\keywords{animal behavior, classification, movement ecology, temporal dependence.}

\section{Introduction}

The aim in the classification problem is either to allocate data to different groups of interest, or to discover sets of patterns that reflect important dynamics of interest. For systems that exhibit temporal dependence, the goal is to correctly assign different segments of data to a finite set of groups, taking into account that observations near each other (in time) are likely to correspond to the same group. For example, in computational linguistics the goal is to identify words and phrases in spoken language, i.e speech recognition \citep{Juang, Deng}; in meteorology, weather change can be monitored analyzing sequential measures from meteorological radars \citep{Rico-Ramirez, ruiz-suarez_sofia_tecnicas_2019}; in neurophysiology, different brain activities can be distinguished assessing physiological variables such as heart rate or electrocardiograms through time\citep{cheng,Inan}; and in ecology, a set of biologically relevant animal behaviors can be identified from observed acceleration data \citep{nathan_using_2012,leos_barajas_analysis_2017}.

In ecology, a lot of attention has been paid to understand and model animal movement \citep{mevin_b._hooten_animal_2017} as it plays important roles in the fitness and evolution of species\citep{nathan2008}, the structuring of populations and communities \citep{morales_building_2010}, and responses to environmental change \citep{Jonsen2016}. In order to understand how animals move as they respond to internal conditions and external environments, it is essential to be able to distinguish between a set of biologically relevant behaviors. Tri-axial acceleration (ACC) data is now commonly recorded using biologging devices,  allowing researchers to investigate the performance, energy expenditure, and behavior of free-living animals \citep{Williams2020}. These devices measure the change in speed over time in three directions, which can be described relative to the body of the individual. With this information it is possible to identify different activity patterns to then distinguish between different behaviors \citep{wilson2008,williams_can_2015}.

There are several techniques used to solve classification problems \citep{trevor_hastie_elements_nodate}: classification trees, logistic regression, discriminant analysis, neural networks, boosted regression trees, random forests, deep learning methods, nearest neighbors, support vector machines, etc. Many of these techniques have been proposed to classify animal behavior from accelerometer data. For example, \citep{nathan_using_2012} 
identified behavioral modes of griffon vultures using a selection of nonlinear and decision tree methods; \citep{carroll_supervised_2014} trained a support vector machine to classify penguin behavior as either ‘prey handling’ or ‘swimming’; \citep{williams_can_2015} examined the ability of k-nearest neighbour algorithms to distinguish between flight behaviors of Andean condors and griffon vultures; \citep{chakravarty_novel_2019} developed an hybrid model combining biomechanical features and support vector machines to identify between four possible behaviors of Kalahari Meerkats; and \citep{studd_behavioral_2019} used a random forest algorithm and a manually created decision tree to associate observed behaviors of North American red squirrels with logger recorded acceleration and temperature. All these methods assume that the observations are independent, yet time series data presents sequential correlation. This characteristic of the data can be exploited to improve the prediction accuracy of the classifiers \citep{Geurts,dietterich_machine_2002}. 

The class of hidden Markov models (HMMs) \citep{zucchini_hidden_2017,fruhwirth-schnatter_handbook_2019} provide an intuitive framework for the classification of systems with temporal dependence that experience changes in patterns over time connected with underlying shifts in a latent process of interest. HMMs assume that the observation(s) at each point in time are the result of the unknown (hidden) ‘state’ (or class) of the system. The state of the system is assumed to change over time according to the Markov property, i.e. the conditional probability distribution of future states (conditional on both past and present states) depends only upon the present state. Thus, an HMM is defined as a doubly stochastic process composed of an observation process $X_t$ and a (latent) state process $C_t$, where the state process is usually taken to be a first-order Markov chain and the distribution of $X_t$ depends only on the current state $C_t$ and not on previous state or observations.
As such, HMMs provide a clear manner in which to do classifications of processes that evolve over time, as they take into account the sequential dependence present in the data and the temporal structure of the consecutive states. 

HMMs can be used for state prediction (supervised approach) or to make inferences about drivers of behavior (unsupervised approach). In the supervised learning context information on the classes of the observations is available in advanced (there is a pre labelled data set), and the number of possible states are known by the user. Alternatively, in the unsupervised learning context there are no labelled data and the number of states is not pre-defined. HMMs have successfully been implemented to classify accelerometer data: \citep{aiguang_li_physical_2010} utilized an HMM to recognize human physical activities using two-second summary features from tri-axial acceleration data; \citep{wang_recognizing_2011} presented an HMM to recognize six human daily activities from sensor signals collected from a single waist-worn tri-axial accelerometer; \citep{leos_barajas_analysis_2017} provided the details necessary to implement and assess an HMM in both the supervised and unsupervised learning context and outlined two applications to marine and aerial systems (shark and eagle) using unsupervised learning.

HMMs are models mostly formulated in discrete time, and as the state process is assumed to be a first order Markov chain, the number of consecutive time points that the system spends in a given state (sojourn time), follows a geometric distribution \citep{zucchini_hidden_2017,LANGROCK2011715}. The popularity of HMMs stems, in part, from the ease with which they can be extended to accommodate other forms of dependence and structures. For instance, the state-duration distribution can be generalized so that the underlying stochastic process is a semi-Markov chain. These models are called hidden semi-Markov models (HSMM) \citep{yu_hidden_2010}, and by being more flexible they allow more realism, improving the classifications when the sojourn time distributions are far from being geometric. HSMMs have been successfully applied in many areas, mostly in speech recognition \citep{chen,Hung-Yan,Hieronymus,Oura}, but also to classify human activities of daily living \citep{Duong,CHUNG20081572}, handwriting \citep{Kashi,Benouareth2007}, and human genes in DNA \citep{Kulp1996}. HSMMs have also been used for language identification \citep{Marcheret}, the prediction of protein structure \citep{Aydin2006}, event recognition in videos \citep{Hongen}, financial time series modelling \citep{BULLA20062192}, classification of music \citep{XiaoBingLiu}, and remote sensing \citep{Pieczynski}, among others.

Classical HMMs and HSMMs assume independence between the observations conditional on the state, but sometimes data is taken at high temporal resolution making this assumption unrealistic which can affect the performance of the classifiers. In such cases, autoregressive structure can be considered to model the observations of each state, leading to autorregresive HMMs and HSMMs (AR(p)-HMM and AR(p)-HSMM) \citep{xu_regularized_2020}, also commonly known as Markov-switching models. Even though these models have been extensively studied, as far as we are aware, these HMM extensions have not been used to classify acceleration data into animal behaviors. 

In this paper, we give an overview of HMM, HSMM, AR(p)-HMM and AR(p)-HSMM; we explain the structure of these models in the supervised classification context, how they can be relaxed and their differences in modelling, inference, estimation, and prediction. We present their formulations and derivations as a compilation of the literature published on this topic extending them to HSMMs with or without autorregresive structure. We then study how these models perform classifications under different scenarios in order to characterize when it would be important to extend an HMM to an HSMM. Finally, we use them to classify accelerometer data from domestic Merino sheep (\textit{Ovis aries}) to distinguish between four different behavioral states.  

\section{Models}
\subsection{Hidden Markov Model (HMM)}
HMMs are composed of two layers: a latent process $\{C_t\}_{t=1}^T$, commonly referred to as the state process, satisfying the Markov property; and an observable state-dependent process $\{X_t\}_{t=1}^T$, where 
$P(X_t|X_1,X_2,\dots X_{t-1},C_t)=P(X_t|C_t)$ 
(Figure \ref{Diagram} (a)). 
At each point in time $t$, $C_t$ is assumed to take on one of $J$ possible values, $C_t \in \{1, 2, \ldots, J\}$, where $J$ denotes the number of states. 

\begin{figure}
 \centering
 \includegraphics[width=0.7\textwidth]{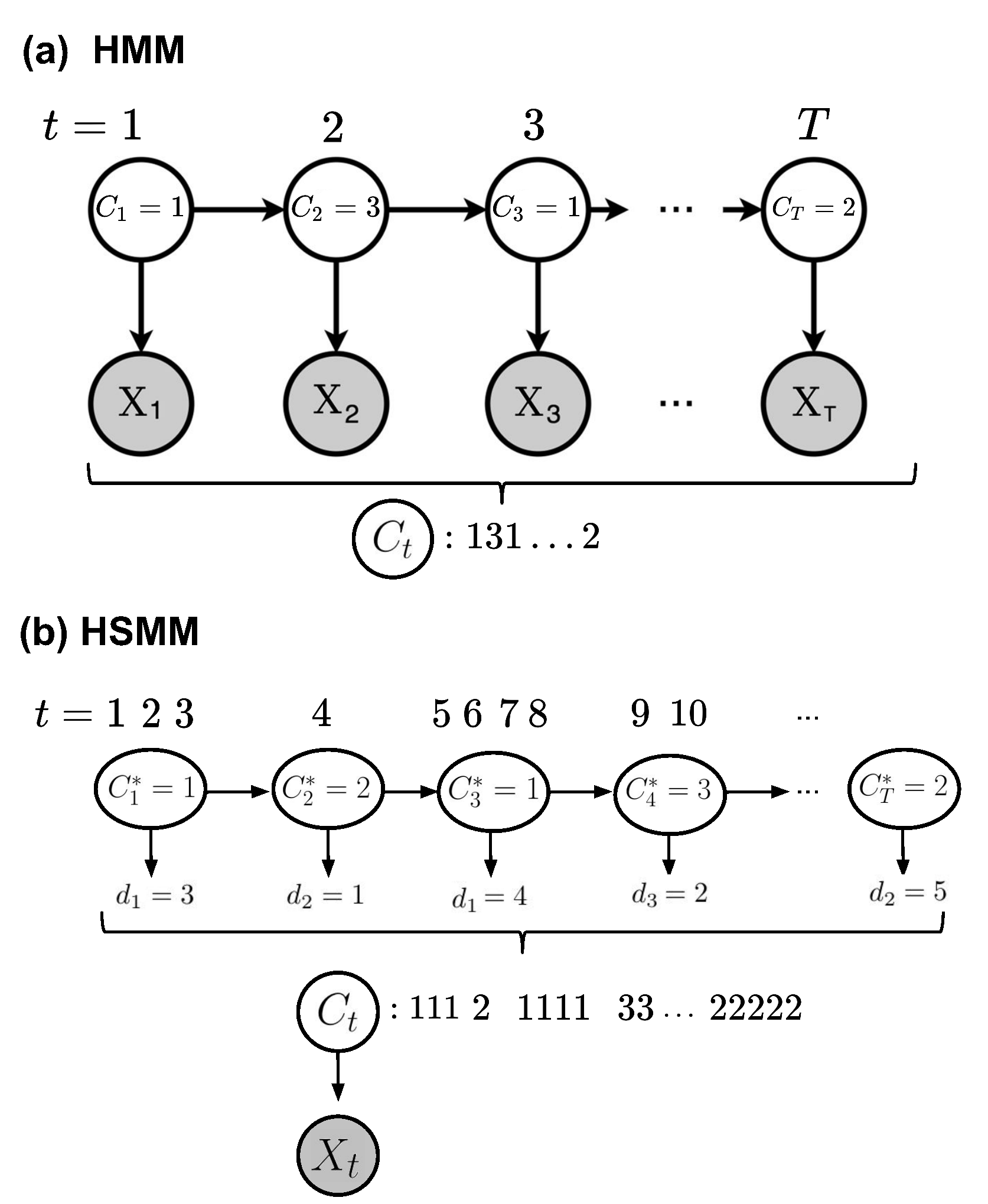}
 \caption{Diagrams of model structure (a)HMM : $C_t$ denotes the latent Markov process and $X_t$ denotes the observation process whose distribution depends on the state $C_t$. (b) HSMM example:  $C_t$ denotes the latent semi-Markov process and $X_t$ the observation process. $C_t^*$ indicates the Markovian process of the non absorbing times (that is, state at time $t$ is equal to the state at time $t-1$), and for each of them $d_t$ gives the values of the sample sojourn times.}
 \label{Diagram}
 \end{figure}
 
The dynamics of the state process $\boldsymbol{C}$ is governed by a $J \times J$ transition probability matrix (t.p.m), $\boldsymbol{\Gamma}$, with entries $\gamma_{ij} = \text{Pr}(C_{t+1} = j|C_t = i)$, for $i,j \in \{1, \ldots, J\}$. The initial probabilities are denoted by $\delta_{i}=\operatorname{Pr}(C_1=i)$,
$i,j = 1,\dots,J$.

By definition, the assumption of a first-order Markov chain structure for the state process implies that the number of consecutive time points that the process spends in a given state before switching, i.e. the sojourn time, follows a geometric distribution with parameter $\gamma_{ii}$. As a consequence, the most likely sojourn time for every state of an HMM is one, and the probability of remaining in a given state decays geometrically. To complete the definition of an HMM, we denote the state-dependent distributions of the observation process (univariate or multivariate) as $f_j(x) = \operatorname{Pr}(X_t=x|C_t=j)$, for $j=1,\dots J$. The state-dependent distributions can be discrete or continuous and it is common to assume the same parametric distribution across all states. 
Therefore an HMM is defined by the pair of stochastic processes $\{X_t,C_t\}$, the state dependent distributions $f_1(x),\dots,f_J(x)$, the transition probability matrix $\boldsymbol{\Gamma}$, and the initial probability vector $\boldsymbol{\delta}=(\delta_1,\dots,\delta_J)$.  

\subsection{Hidden Semi-Markov Model (HSMM)}

One of the limitations of a basic HMM is the assumption that the sojourn time follows a geometric distribution. In fact, much work has been done to extend the basic HMM and explicitly model the sojourn time by other discrete-valued distributions, resulting in the class of hidden semi-Markov models (HSMM) \citep{yu_hidden_2010}. Like an HMM, an HSMM is a doubly stochastic process composed of two layers: a state process $\{C_t\}_{t=1}^T$ and an observable state-dependent process $\{X_t\}_{t=1}^T$. For an HSMM, however, the state process is now assumed to be a semi-Markov chain where the sojourn time in a given state is taken to follow any valid discrete distribution (e.g.\ Poisson, negative binomial). We provide  definitions and notation of an HSMM in the remainder of the section. 

In order to formalize the structure of the model, we first divide the times $t$ into two disjoint categories depending on state $C_t$: 
\begin{enumerate}
    \item \textbf{Non absorbing time}(NAT): The state at time $t$ is different than the state at time $t-1$. ($C_t\neq C_{t-1}$) 
    \item \textbf{Absorbing time}(AT): The state at time $t$ is equal to the state at time $t-1$. ($C
    _t=C_{t-1}$). 
 \end{enumerate}

The subsequence of states of non absorbing times is a first-order Markov process, while the sequence of $C_t$ values is a finite-state semi-Markov chain.
For simplicity, in this paper we assume that state switches occurred at time $t=1$ and $t=T$, such that $C_1 \neq C_0$ and $C_T \neq C_{T+1}$. A HSMM with that simplifying assumption is called a right censored HSMM \citep{guedon}.

For each state $i$ of the non absorbing times, the sojourn time is a realization of the corresponding sojourn-time distribution $d_i$. 
Suppose that $t^*$ is a non absorbing time of state $i$ then the probability that the state $i$ lasts $u$ steps, 
$d_i(u)=\operatorname{Pr}(C_{t^*+u+1}\neq i \wedge C_{t^*+v}=i, v=1,...u|C_{t^*}=i), \quad u\geq 1$. 
For an HSMM, the t.p.m contains the conditional transition probabilities $(\Gamma)_{ij}=\gamma_{ij}=\operatorname{Pr}(C_{t+1}= j|C_t=i , C_{t+1}\neq i)\quad  i\neq j, \quad j = 1,\dots,J$ with $\gamma_{ii}=0 \quad \forall i=1\dots,J$. The probabilities of self-transitions are determined by $d_i$-sojourn distributions.

In Figure \ref{Diagram}(b) an example of a $3$-state HSMM ($C_t \in \{1,2,3\}$) is represented. 
The Markovian layer $C_t^*$ governs the change of states and gives rise to a run of $d$ values all equal to  $C_t^*$.  These $d$ values are realizations from the sojourn time distributions $d_i$ ($i=1,2,3$). In this example $C_1^*=1$ and $d_1=3$,  
indicating that the semi-Markov process starts with three ones, i.e. $C_1=1$, $C_2=1$ and $C_3=1$. Then $C_2^*=2$ and $d_2=1$, indicating that the sequence follows with a single 2, $C_4=2$ . Next  $C_3^*=1$ and $d_1=4$, so four ones are added to the sequence, $C_5=1$, $C_6=1$,$C_7=1$, $C_8=1$  and so on. The observation process
is akin to an HMM in that the observations $X_t$ are random variables that depend on the values of $C_t$. Therefore a HSMM can be defined by the pair of stochastic process $\{X_t,C_t\}$, $X_t$ dependent on a semi-Markov chain $C_t$, the state dependent distributions $f_1(x),\dots,f_J(x)$, the sojourn-time distributions $d_1(u).\dots,d_J(u)$, the t.p.m $\boldsymbol{\Gamma}$, and the initial state distribution $\boldsymbol{\delta}=(\delta_1,\dots,\delta_J)$. 

For the remainder of this article, we use the following notation. The observed sequence of length $l$, $(X_t,X_{t+1}\dots,X_{t+l})$ is denoted by $\boldsymbol{X}_{t:t+l}$, and  the state sequence $(C_t,C_{t+1},\dots C_{t+l})$ is denoted by $\boldsymbol{C}_{t:t+l}$. We use $\boldsymbol{C}_{[t_1;t_2]}=i$ to denote that the vector of state variables $\{C_{t_1} =i , C_{t_1 + 1}=i, \ldots, C_{t_2}=i\}$ takes on the value $i$ with state changes occurring at $t_1$ and $t_{2}+1$, $C_{t_1-1}\neq i$ and $C_{t_2+1}\neq i$.
We let $\boldsymbol{C}_{[t_1;t_2}=i$ denote similarly that the sequence of state variables $\{C_{t_1}=i, \ldots, C_{t_2}=i\}$ take on the value $i$ with a state change occurring at time $t_1$, i.e.\ $C_{t_1-1} \neq i$, but allow for $C_{t_2+1} \in \{1, \ldots, J\}$. Similarly $\boldsymbol{C}_{t_1;t_2]}=i$ denotes a state change at time $t_2 + 1$ but allows for $C_{t_1-1} \in \{1, \ldots, J\}$. By $C_{[t_1=i}$ we mean that at time $t_1$ the system switched from some other state to $i$, and  by $C_{t_1]=i}$ that at time $t_1$ the state will end and transit to some other state.
Finally, we let $\boldsymbol{\theta}=(\boldsymbol{\theta}_\delta,\boldsymbol{\theta}_P,\boldsymbol{\theta}_\Gamma,\boldsymbol{\theta}_d)$ be the vector of parameters of the HSMM: $\boldsymbol{\theta}_\delta$ the initial probabilities, $\boldsymbol{\theta}_P$ the parameters of the state dependence distributions $f_i(x)$, $\boldsymbol{\theta}_\Gamma$ the conditional transition probabilities of the t.p.m, and $\boldsymbol{\theta}_d$ the parameters of the sojourn time distributions $d_i(u)$. 

\subsection{Autoregressive Structure (AR(p)-HMM and AR(p)-HSMM)}

In basic HMMs and HSMMs, we assume conditional independence of $\boldsymbol{X}$ given $\boldsymbol{C}$. However, for time series collected at fine temporal scales, the autocorrelation structure may not adequately be captured using this framework. A common extension is to assume that the state-dependent distribution of $\{X_t\}_{t=2}^T$ depends on the current state $C_t$ as well as on a subset of previous observations. This new assumption leads to an important class of models called Markov-switching models or autoregressive hidden (semi-)Markov models \citep{hamilton_1991}. 
In this framework, given the current state, $\{X_t\}$ is assumed to follow state-dependent Gaussian autoregressive processes of order $p$ $\left( \text{AR(p)} \right)$:
\begin{equation*}
x_t=\mu_i+\sum_{k=1}^p \omega_{ki}x_{t-k}+\epsilon_{ti} \quad \text{with } \epsilon_{ti}\sim N(0,\Sigma_i)
\end{equation*}
where $\mu_i$, $\omega_{k i}$ and $\Sigma_i$ are the conditional mean, autoregressive parameters and covariance matrix conditioned on $C_t=i$.  We denote the state-dependant distributions of an AR(p)-HSMM as
\begin{equation*}
f_i(x_t)=f(x_t|C_t=i,\boldsymbol{x}_{t-p:t-1})
\end{equation*}
and the joint distribution for vector $\boldsymbol{x}_{t:t'}$ with $t<t'$ as,
\begin{equation*}
f_i(\boldsymbol{x}_{t:t'})=
\prod_{k=t}^{t'} f(x_k|C_k=i,\boldsymbol{x}_{k-p:k-1})=
\prod_{k=t}^{t'}f_i(x_k).
\end{equation*}
The use and statistical properties of HMMs with autoregressive structure have been studied and formalized \citep{yang_properties_2000,francq_stationarity_2001}, and  further generalized for HSMMs \citep{xu_regularized_2020}. AR(p)-HSMM have been also used to perform classification in an unsupervised manner; \citep{bryan_autoregressive_2015} demonstrates its application to speech signal recognition and \citep{Duong} proposed the use of a switching hidden semi-Markov model to recognize human activities of daily living. However, to our knowledge, the use of AR(p)-HSMM to perform supervised classification has not been studied or applied.

\section{Classification via Supervised Learning}

In many applications, the primary objective of the analysis is to accurately predict the latent states of the system. The observations themselves are not of real interest as their function is merely to provide information about the states \citep{zucchini_hidden_2017}. For processes that evolve over time (or in sequence), HMMs (and their extensions) can be applied to classify observations into different latent states while taking into account that observations \textit{close to one another} in time are likely to have arisen from the same state. 
In this section we discuss the implementation of HMMs, HSMMs, and their respective autoregressive versions, in a supervised learning approach. 

Supervised learning methods assume that there is a data set for which labels are available for all samples. This labelled data set is used to fit the model and assess its predictive capability. Then, the labelled data set is distributed among the train set to fit the model and the test set to measure the accuracy of the final model. Finally, once the model has been evaluated, it can be used to classify unlabelled data.  Thus, when the aim is to perform classification using HMMs or HSMMs with a supervised approach, the steps are: split the labelled data, train the model (section \ref{inference}), predict hidden states (section \ref{prediction}), estimate the classification error (section \ref{error}), and eventually classify unlabelled data.

\subsection{Inference}\label{inference}

Inference for $\boldsymbol{\theta}$ is done via use of the complete-data likelihood as both the observations and values of the states are known in the training data. The complete-data likelihood is expressed as the joint distribution of the observations, $X_t$, and states, $C_t$,

\begin{align*}
f\left(\boldsymbol{c}_{1:T}, \boldsymbol{x}_{1:T}|\boldsymbol{\theta}\right) &=\operatorname{Pr}\left(\boldsymbol{C}_{1:T}=\boldsymbol{c}_{1:T}, \boldsymbol{X}_{1:T}=\boldsymbol{x}_{1:T} | \boldsymbol{\theta}\right) \\ &=\delta_{C_{1}} d_{C_{1}}\left(u_{1}\right) 
\prod_{t=1}^T f_{C_{t}}\left(x_t\right)\prod_{r \text{ is NAT}} \gamma_{c_{r-1}, c_{r}} d_{C_{r}}\left(u_{r}\right)  
\end{align*}

Note that for the autoregressive case $f_{C_t}(x_t)$ depends on the $p$ previous observations, i.e. $f_{C_t}(x_t)=f_{C_t}(x_t|\boldsymbol{x}_{t-p:t-1})$. The complete-data likelihood of an HMM necessarily has $d_{C_r}(.)\sim Geom(\lambda_r)$.

The complete-data likelihood can be expressed as a product of the terms of the t.p.m., initial distribution, state-dependent distributions and sojourn time distributions, which allows the complete-data log-likelihood to be written as a sum,
\begin{equation*}
log(f(\boldsymbol{c}_{1:T},\boldsymbol{x}_{1:T},\boldsymbol{\theta}))=log(\delta_{C_1})+log(d_{C_1}(u_1))+
\sum_{t=1}^Tlog(f_{C_t}(x_t))
+\sum_{\substack{t \in \text{ NAT}}} log(\gamma_{c_{r-1},c_r})+log(d_{C_r}(u_r))
\end{equation*}

For $K$ independent time series, the complete-data likelihood is the product of the individual complete-data likelihoods. As we conduct inference in a Bayesian framework, we first specify prior distributions for all parameters $f(\boldsymbol{\theta})$ and derive the joint posterior distribution $f(\boldsymbol{\theta}|\boldsymbol{x}_{1:T},\boldsymbol{C}_{1:T})\propto f(\boldsymbol{c}_{1:T},\boldsymbol{x}_{1:T},\boldsymbol{\theta})f(\boldsymbol{\theta})$. 
In particular, the joint log posterior distribution can be expressed as a sum of the components of the complete-data log likelihood and prior distributions. Full details of the prior distributions specified and derivation of the joint posterior distribution are provided in the Appendix (S1).

\subsection{State prediction}\label{prediction}

The objective of classification, in this context, is to determine which states are likely to have generated the observations. This process is known as \textit{state decoding} and can be done in one of two general ways: (i) \textit{local state decoding} -- identify the most likely state at each point in time or (ii) \textit{global state decoding} --  identify the sequence of states that is most likely to have generated the full observation sequence. In what follows we formalize and discuss the most common algorithms and their adaptations to HSMMs and AR-HSMMs for state decoding.   

\subsubsection{Local State Decoding }

Local state decoding is the process of determining at time $t$, what label of the state $C_t$ is most likely to have generated the observation $x_t$.
To do so, we compute and use the marginal probabilities $\operatorname{Pr}(C_t=i|\boldsymbol{X}_{1:T})$ via use of the forward-backward algorithm, which can be expressed as a function of the \textit{so-called} forward probabilities $\{\alpha_{t}\}_{t=1}^T$ and backward probabilities $\{\beta_t\}_{t=1}^T$. For the basic HMM, these quantities are given as 

$$\alpha_{t}(j)=\operatorname{Pr}\left(\boldsymbol{X}_{1:t}, C_{t}=j\right) \quad  \beta_{t}(j)=\operatorname{Pr}\left(\boldsymbol{X}_{t+1:T}| C_{t}=j\right)$$
where $\beta_T(j) = 1$, for all $j \in \{1, \ldots, J\}$ so that

$$\alpha_t(j) = f_j(x_t)\sum_i\gamma_{i,j}\alpha_{t-1}(i) \quad \beta_{t}(j) =\sum_i\beta_{t+1}(i)\gamma_{i,j}f_i(x_{t+1})$$

Given $\alpha_t(j)$ and $\beta_t(j)$, we can express the marginal probability as,
$$\operatorname{Pr}\left(C_{t}=j |\boldsymbol{X}_{1:T}\right)=\frac{\alpha_{t}(j) \beta_{t}(j)}{\text{Pr}(\boldsymbol{X}_{1:T})}.$$
Further details are given in \cite{bishop_model-based_2012} and \cite{zucchini_hidden_2017}. Derivation of the marginal state probabilities for the HSMM are similar to the HMM, while further adaptations are considered for the AR(p)-HSMM. We first express the forward and backwards probabilities as follows:   

\begin{equation*}
\alpha_{t}(j) = \operatorname{Pr}\left( \boldsymbol{X}_{1:t},C_{t]}=j\right)    
\quad
\beta_{t}(j) = \operatorname{Pr}\left(\boldsymbol{X}_{t+1:T}|C_{t]}=j,\boldsymbol{X}_{t-p:t}\right)
\end{equation*}

It is also possible to give recurrence equations that express $\alpha_t(\cdot)$ in terms of $\alpha_{t-d}(\cdot)$ and $\beta_{t}(\cdot)$ in terms of $\beta_{t+d}(\cdot)$. For $t=2,\dots,T$  and $j=1,\dots,J$ it can be shown that,

\begin{center}
\begin{itemize}
\centering
    \item[] $\alpha_{t}(j)=\sum_{d \in \mathcal{D}}\sum_{\substack{i\neq j}} \alpha_{t-d}(i) \gamma_{ij}d_j(d)f_j(\boldsymbol{x}_{t-d+1:t})$
    \item[] $ \beta_{t}(j)=\sum_{d \in \mathcal{D}}\sum_{\substack{i\neq j}} \beta_{t+d}(i) \gamma_{ji}d_i(d)f_i(\boldsymbol{x}_{t+1:t+d})$
\end{itemize}
\end{center}
Full details provided in Appendix S2.1. 
Let the marginal probabilities be expressed as $\xi_{t}(j)=\operatorname{Pr}\left(C_{t}=j , \boldsymbol{X}_{1:T}\right)$ and let
$\beta^*_t(j)=\operatorname{Pr}(\boldsymbol{X}_{t+1:T}|C_{[t+1}=j)$ (or $\beta^*_t(j)=\operatorname{Pr}(\boldsymbol{X}_{t+1:T}|C_{[t+1}=j,\boldsymbol{X}_{t-p:t})$ for the AR(p)-HSMM case).
It follows from the definition of $\beta^*_t(j)$ that,
\begin{equation*}
\begin{aligned}
 &\beta_t(j)=\displaystyle\sum_{i\neq j}\gamma_{ji}\beta^*_t(j)\\
&\beta_t^*(j)=\displaystyle\sum_{d \in \mathcal{D}} d_j(d)\beta_{t+d}(j)f_j(\boldsymbol{x}_{t+1:t+d})\\
\end{aligned}
\end{equation*}
We can then calculate the marginal probabilities as follows (full details provided in the Appendix S2.1). For $t=2,\dots,T$  and $j=1,\dots,J$
\begin{equation*}
\xi_{t}(j)=\xi_{t+1}(j)+\alpha_{t}(j)\sum_{i\neq j}\gamma_{ji}\beta_t^*(i)-\beta_t^*(j)\sum_{i\neq j}\alpha_t(i)\gamma_{ij}
\end{equation*}
For local state assignment, we use 
$\hat{C}_{t}=\arg \max _{j \in 1,\dots,J}\left\{\xi_{t}(j)\right\}$.

\subsubsection{Global State Decoding}

Global state decoding is a manner to identify the most likely sequence of states that could have given rise to the observed time series data. 
For this task, we determine the sequence  $c_1, c_2, \dots , c_T$ that maximizes the conditional
probability $\text{Pr}(\boldsymbol{C}_{1:T}|\boldsymbol{X}_{1:T})$ or equivalently the joint probability $\text{Pr}(\boldsymbol{C}_{1:T},\boldsymbol{X}_{1:T})$ via use of the Viterbi algorithm\citep{viterbi}. For the basic HMM we first define
$$\psi_{1 i}=\operatorname{Pr}\left(C_{1}=i, X_{1}\right)=\delta_{i} f_{i}\left(x_{1}\right)$$
and for $t=2,\dots,T,$
$$\psi_{t i}=\max _{c_{1:t-1}} \operatorname{Pr}\left(\boldsymbol{C}_{1:t-1}, C_{t}=i, \boldsymbol{X}_{1:t}\right)$$
Given the following recursion,
\begin{equation}
\psi_{t j}=\left(\max _{i}\left(\psi_{t-1, i} \gamma_{i j}\right)\right) f_{j}\left(x_{t}\right)
\label{recHMMViterbi}
\end{equation}
the most probable state sequence is obtained by

$$\hat{i}_{T}=\underset{i=1, \ldots, m}{\operatorname{argmax} \psi_{T i}}$$
and for $t=T-1,T-2,\dots, 1$
$$\hat{i}_{t}=\underset{i=1, \ldots, m}{\operatorname{argmax}}\left(\psi_{t i} \gamma_{i, i_{t+1}}\right)$$

Global state decoding for an HSMM is similar to that of a basic HMM, but also requires that most likely sojourn times be determined in addition to the most likely state sequence.
We define 

$$\psi_t(j,d)=\max_{c_{1:t-d}}\operatorname{Pr}\left(\boldsymbol{C}_{1:t-d},\boldsymbol{C}_{[t-d+1:t]=j},\boldsymbol{X}_{1:T}\right)$$

Similar to the basic HMM, it is possible to obtain the following recursion (full details provided in Appendix S2.2). For $t=2,\dots,T$, $d=1,\dots, D$  and $j=1,\dots,J$
\begin{equation*}
\psi_{t}(j,d)=\max_{\substack{i \neq j \\ d'\leq t} } \left\{\psi_{t-d}(i,d')\gamma_{ij}d_j(d) f_j(\boldsymbol{x}_{t-d+1:t})\right\}
\end{equation*}
Then, starting at time $T$, the most likely sequence of states and duration times is determined by 

$$(\hat{i}_T,\hat{d}_T)=\underset{\substack{j=1, \ldots, J \\ d=1, \dots D} }{\operatorname{argmax} \psi_{T}(j,d)}$$
and for $t=T-1,\dots 1$

$$(\hat{i}_t,\hat{d}_t)=\underset{\substack{j \neq {\hat{i}_{t+1}}\\d=1,\dots D}}{\operatorname{argmax}}\left(\psi_{t-d}(j,d') \gamma_{j, \hat{i}_{t+1}}d_{\hat{i}_{t+1}}(d)f_{\hat{i}_{t+1}}(\boldsymbol{x}_{t-d+1:t})\right)$$

The results of local and global state decoding are often very similar but not identical \citep{zucchini_hidden_2017}. One advantage of local state decoding via the forward-backward algorithm over global state decoding via the Viterbi algorithm is that it provides an estimated probability for every possible state in all times. These probabilities provide information about the uncertainty over the predictions which can be useful for model comparison.

\subsection{Classification Error}\label{error}

Given a fitted model, we can assess its capacity to classify the states correctly via both local and global state decoding. One of the most common measures to estimate prediction accuracy is the following classification accuracy index: the number of correct state predictions divided by the total number of observations in the time series. We can compute this measure from both the local and global decoding output. However, although this manner of estimating the prediction accuracy gives a general measure of the performance of the model, it does not take into account the uncertainty over the classification.  
Another manner to assess prediction capacity of our models is to use cross-entropy as it makes use of predicted probabilities of the states obtained via local state decoding and is given as,
$$CE=-\sum_{c=1}^{J} y_{x, c} \log \left(p_{x, c}\right)$$
where $y_{x,c}$ is a binary indicator of whether state label $c$ is the correct classification for observation $x$ ($y_{x,c}=1$ when the classification is correct), and $p_{x,c}$ is the  predicted probability observation that $x$ is generated by state $c$. The cross-entropy index increases as the predicted probability diverges from the actual label, such that a perfect model would have a CE value of $0$.

To measure out-of-sample prediction accuracy (or error), we adapt \textit{leave-one-out} cross validation to our application and use a \textit{leave-one time series-out} cross-validation approach \citep{Geisser,Refaeilzadeh2009}.  By doing so, we make the assumption that each time series is independent of each other and fulfill the requirement that the training and test sets are independent in order for cross-validation approaches to be valid.

\section{Simulation Study}
In some applications the assumption that the sojourn times follow geometric distributions (HMMs) is far from being realistic. In such cases, HSMMs may be considered more appropriate as they allow each state to have a variable duration time \citep{yu_hidden_2010}. 
However, when the main goal is to conduct classification it may not be necessary to extend the model to obtain correct state predictions. In this study we aim to characterize in which scenarios the state predictions of an HMM strongly differ from those of an HSMM, in order to understand when violating the assumption of geometric sojourn times can affect the accuracy of classifications.  

We simulate different scenarios of univariate time series of length 3000 from a 2-state HSMM with different sojourn times and observation distributions. Across all cases we assume Gaussian distributions for the observations and negative binomial distributions for the sojourn times. We propose three scenarios differing on the overlap between the observation distributions: the first assumes that $f_1(x_t)\sim N(0,1)$ and $f_2(x_t)\sim N(0.3,1)$ (high overlap), the second one assumes that $f_1(x_t)\sim N(0,1)$ and $f_2(x_t)\sim N(1,1)$ (medium overlap) and the last one assumes that $f_1(x_t)\sim N(0,1)$ and $f_2(x_t)\sim N(3,1)$ (low overlap). The overlap degree gives an idea of how similar are the observation distributions. A high overlap implies that the distributions are almost equal, i.e the area between them includes at least $80\%$ of them. A medium overlap indicates that the distributions are similar but they differ in at least a $40\%$, and finally a low overlap indicates that the distributions are different, sharing at most a $15\%$ of them. For each scenario we fix different parameter values for the negative binomial distributions: 
$d_1\sim NB(m_1,k_1)$ and $d_2\sim NB(m_2,k_2)$, where $m_1$ and $m_2$ are the means and $k_1$ and $k_2$ the dispersion parameters. When $k=1$ the negative binomial is a geometric distribution with probability $1/(m+1)$, and as $k$ takes values greater than 1 the dispersion of the distribution increases, moving away from the geometric distribution.  In order to characterize different scenarios, we first set an average and a difference between $m_1$ and $m_2$ ($(m_1+m_2)/2$ and $m_1-m_2$). We consider as average between $m_1$ and $m_2$: 90, 40 and 20, and as difference between them: 3, 15 and 30.  We then vary $k_1$ and $k_2$ so that either $k_1$ equals 1 (one is geometric: $k_1=1$ and $k_2=10$ or $k_2=30$) or both $k_1$ and $k_2$ are greater than one (none are geometric: $k_1=30$ and $k_2=50$ or $k_1=80$ and $k_2=100$). The values of $k_i$ and $m_i$ ($i=1,2$) were selected seeking to consider as many schemes as possible. When $k_i=1$  the most probable values is $1$ and as the value of $m_i$ increases the distributions have longer tails. When $k_i=10$ or $k_i=30$ the most probable value is nearer $m_i$, obtaining more concentrated distributions for $k_i=30$ and more dispersed distributions for $k_i=10$. The difference between $m_1$ and $m_2$ indicates how similar are the means of sojourn times of the two states.
For each parameter combination, we simulate ten time series of length 3000. To assess and compare the classification accuracy under both models (HMM and HSMM) we use \textit{leave-one time series-out} cross-validation approach as follows: we fit both models with all the simulated time series but one, then sampling from the joint posterior distribution obtained, we predict 30 times the hidden states of the simulated time series that we left out using Viterbi and FB algorithms. Finally we calculate the accuracy index of both predictions and the cross entropy value from the local decoding output. To fit the model we consider uninformative priors: $N(0,5)$ and $TN(0,5)$ for the mean and standard deviation of the observation distributions and $TN(20,50)$ for the mean and dispersion parameter of the sojourn time distributions. To perform the analysis we used R \citep{RCRAN}, to conduct parameter estimation we used Stan \citep{stan} and to implement the Viterbi algorithm for the HSMM we used the ``hsmm.viterbi" function of the R package ``hsmm'' (\url{https://cran.r-project.org/web/packages/hsmm/index.html}). The code is available at https://github.com/sofiar/HMM-HSMM-classification. 

Figure \ref{allsimuResults} shows the estimated accuracy values and the cross entropy index for each scenario obtained from the simulation analysis. It is clear that the performance of the models is lower when the overlap between the observations is larger: the cross entropy values increase and the accuracy indices decrease. When the average between $m_1$ and $m_2$ (the means of the sojourn time distributions) is large the differences between the three classification indices are low. However, when this average decreases the HSMM outperforms the HMM, since lower values of cross entropy index and higher values of accuracy are obtained. In general, for both models, the three indices improve as the average between $m_1$ and $m_2$ increases. With regards to the values of the dispersion parameters ($k_1$ and $k_2$) of the sojourn times, it is clear that when one of the distributions is geometric (case 1 in the axis label), the differences between the models are smaller. This last fact makes sense with the model hypothesis, i.e. in HMMs the sojourn times follow a geometric distribution and in the HSMM they follow a negative binomial distribution. 

The accuracy values of the FB algorithm and the Viterbi algorithm resulted to be practically equal in all the cases except when there is high overlap between the observation distributions and the average between $m_1$ and $m_2$ is low. In that case the accuracy values for the HSMM present higher dispersion: interquartile range between 0.05 and 0.55 for the HSMM, and 0.04 and 0.004 for the HMM. For the majority of the scenarios in terms of classification capacity the HSMM outperformed the HMM, obtaining larger differences when the overlap of the observation distributions increase, the average between the mean parameters of the sojourn times distributions decrease, and when the dispersion parameters are greater than one. Only when the overlap between the observation distributions is low and the averages between $m_1$ and $m_2$ are not too small, both models behave practically equally. 

\begin{figure}[hp!]
\centering
\includegraphics[scale = 0.3]{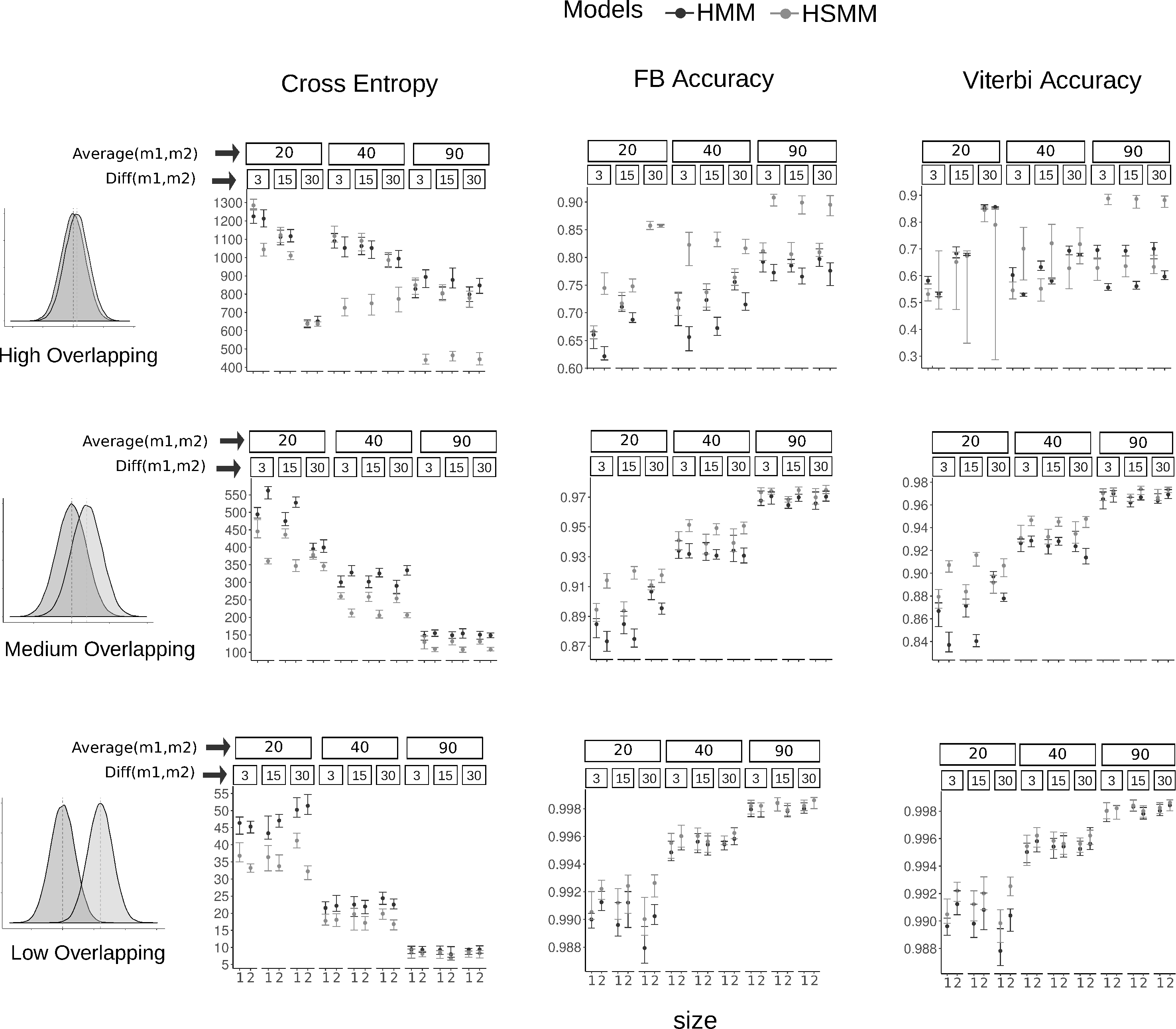}
\caption{Values of the three indices computed for the simulation study. By column the index calculated (Cross Entropy, Forward-Backward Accuracy and Viterbi Accuracy), by row the degree of overlap between the observation distributions (high, medium and low). In each box, the x-axis refers to the configuration of the dispersion parameters of the sojourn times ($k_1$ and $k_2$): for label $1$ $k_1=1$ (one is geometric) and for label $2$ neither $k_1$ nor $k_2$ equals $1$ (none is geometric). Groups of three consecutive values of each box correspond to the different averages between $m_1$ and $m_2$ ($20$,$40$ and $90$), and for each of them in order the three possible difference between $m_1$ and $m_2$ ($3$,$15$ and $30$). Points represent the median value and bars depict the first and the third quartile. With black the values obtained for the HMMs and with gray the values obtained for the HSMMs.  
} 
\label{allsimuResults}
\end{figure}

\section{Sheep acceleration Data}

We aim to classify accelerometer data from domestic merino sheep to distinguish between four different behavioral states.  As animal accelerometer data is typically collected at a fine temporal resolution, leading to high temporal dependence of the observation process, we proposed the use of HMMs and HSMMs to classify the states. Fieldwork was conducted at the "Campo Anexo Pilcaniyeu" from INTA (National Institute of Agricultural Technology) Bariloche, Patagonia Argentina. The data were collected from 25 different sheep during September and December of 2019. The animals were equipped with collars containing a DailyDiary\citep{wilson2008} developed at Swansea University, which were programmed to record 40 acceleration data per second (frequency of 40 Hz) and 13 magnetometer data per second (frequency of 13 Hz). The units where fixed on the top of the animal's neck and recorded acceleration in three axes: anterior–posterior (surge), lateral (sway) and dorso-ventral (heave).

In order to obtain the labelled data set, recording sessions were done on September 24th, 2019, and December 17th, 2019. These recordings served as the groundtruthing for our behavior recognition scheme. The acceleration data obtained was smoothed considering the moving average of windows of ten observations (a quarter of a second).
From the observed data we distinguished four behaviors: (i)Inactive (when the animal is still: resting or vigilant), (ii) Walk (normal walking speed with the head raised), (iii) Fast Walk (when the animal runs or moves fast) and (iv) Foraging (when the animal eats or looks for food. This can involve some walking but it is slow and with the head down). Manual classification was made at a 1s temporal resolution.

To define the variables involved in the observational process of the models, we identify the best characteristics that differentiate each behavior, i.e. we determine acceleration-derived features with low overlapping distributions between the states. To accomplish that, we analyzed the acceleration axes and two derived values: 
the vectorial dynamic body acceleration(VeDBA, \citep{Qasem}) and the head pitch angle \citep{wilson2008}. The VeDBA is the square root of the sum of the squares of the three acceleration axes; this quantity can be considered as a proxy of the animal energy expenditure \citep{Qasem}. The head pitch can be derived from an approximation of the running mean of the surge sensors and it is a proxy of the angle of the head. It is primarily used to determine if the animal's head is facing upwards, in a neutral position, or downwards. 

In the case of the sheep, the head pitch gives important information. The head of the animal looking down (negative pitch values) indicates foraging or searching behavior, the head in a neutral position (pitch values near zero) indicates that the sheep is probably in an inactive state, and finally a variable orientation of the head measured at a level of one or two seconds (high variance of the pitch values) implies that the the sheep could be running or walking.  The VeDBA index also gives useful information: low VeDBA values indicates less activity (inactive behavior) and high values indicates more activity (walk and fast walk behaviors). After analyzing several summary statistics derived from these quantities and seeking to differentiate as best as possible the pre-established behaviors, we selected three summary statistics over a one second window: \texttt{log(Mean.VeDBA)}, the logarithm of the mean VeDBA over one second; \texttt{Mean.Pitch}, the mean Pitch angle over one second; and the \texttt{log(Sd.Pitch)}, the logarithm of the standard deviation of the Pitch angle over one second.

To distinguish between the four behavioral states and considering a 1s temporal scale, we analyzed the classifications derived from four models: (i) HMM, (ii)AR(1)-HMM, (iii) HSMM and (iv)AR(1)-HSMM.   
The following assumptions were made: 
\begin{itemize}
\item[-] $\forall t=1,\dots,T$,  $C_t\in \{1,2,3,4\}$, with $1=\text{Walk}$, $2=\text{Fast-Walk}$, $3=\text{Inactive}$ and $4=\text{Foraging}$
\item[-] $\boldsymbol{\delta}=(\frac{1}{4},\frac{1}{4},\frac{1}{4},\frac{1}{4})$
\item[-] {$\forall t=1,\dots,T$, $\boldsymbol{X}_t=(X_t^1,X_t^2,X_t^3) $ with,
$X_t^1$ the \texttt{log(Mean.VeDBA)} , $X_t^2$ the \texttt{Mean.Pitch}, and $X_t^3$ the \texttt{log(Sd.Pitch)}}
\end{itemize}
For the non autoregressive models (i-iii)
\begin{itemize}
    \item[-] For $i \in \{1,2,3,4\}$ $P(\boldsymbol{X}_t|C_t=i) \sim N(\mu_i,\boldsymbol{\Sigma}_i)$, with $\boldsymbol{\Sigma}_i$ a diagonal matrix. 
\end{itemize}
For the autoregressive models (ii-iv)
\begin{itemize}
    \item[-] For $i \in \{1,2,3,4\}$ $P(\boldsymbol{X}_t|C_t=i) \sim N(\alpha_i+\beta_ix_{t-1},\boldsymbol{\Sigma_i})$, with $\boldsymbol{\Sigma}_i$ a diagonal matrix. 
\end{itemize}
and for the both HSMMs (iii-iv)
\begin{itemize}
    \item[-] For $i \in \{1,2,3,4\}$ $d_i\sim NB(m_i,k_i)$, where $m_i$ is the mean and $k_i$ the dispersion parameter. 
\end{itemize}
 
In order to assess the classification performance of the four models described above, we conducted a similar \textit{leave-one time series-out} cross-validation approach as detailed in the previous section.  To simplify the analysis, we considered the times series from each video as independent. However, as multiple videos correspond to the same animal it could be possible to add a random effect to take into account a potential correlation across time series from the same individual. In this case we first fit the four models with all the time series but one, and then predicted the remaining time series with the two decoding algorithms 100 times (sampling 100 times from the joint posterior distributions). Finally we calculated the two accuracy indices and the cross entropy value. Generally, when the aim is to identify different behaviors from acceleration data, one can incorporate previous knowledge about the mean, mode and/or variance into the distribution of the observations and sojourn times for each behavior. 
For instance, in the case of the sheep it is possible to assume that the sojourn time distribution of the resting class will contain larger values when compared to other sojourn time distributions, and that the distribution of mean pitch of the eating behavior should place almost all probability mass on negative values. As such, to fit the models we considered normal priors with high standard deviation and values of mean that differ according to each behavior. The list of the priors used are provided in the Appendix S3. 

Furthermore, we fit the four models with all the time series available to analyze the ability of them to capture the structure of the system.  In order to measure predictive capacity we computed a root mean square error (RMSE) over the observations: for each observed time series and model, using a 100 sample of the previously fitted posterior, we computed 100 predictions (using the FB algorithm) of the hidden states and the observation process. We then calculated the RMSE between the predictions and observations. To evaluate the goodness of fit we finally compared the estimations of the sojourn times assuming geometric distributions (HMM and AR(1)-HMM) and as negative binomial distributions (HSMM and AR(1)-HSMM) with the empirical distributions obtained from the observed sojourn times.

After discarding the videos that had not captured sheep or those for which acceleration data was not available, a total of 18 videos from eight different animals were obtained from the session of September and a total of 49 videos from 17 different animals from the session of December. Out of 67 time series with a total duration of two hours and 53 minutes, a $62.27\%$ were from the Inactive behavior, a $33.51\%$ from the Foraging state, a $3.75\%$ corresponded to the Walk class, and the remaining $0.47\%$ to the Fast Walk state. The data
used is available at https://doi.org/10.6084/m9.figshare.14473455.

Figure \ref{patternsexample} shows two examples of the signal patterns obtained for the four behaviors. When the sheep is inactive, the \texttt{log(Sd.Pitch)} and the \texttt{log(Mean.VedBA)} are low, meaning that the orientation of the head is almost constant and the energy expenditure is very low. When the sheep is eating or searching, the VeDBA values are higher and the pitch angle is lower. The patterns of the Walk behavior are similar to the previous one, however, the pitch angle values can be higher and less constant. The highest values of the standard deviation of the \texttt{log(Mean.Pitch)} and of the \texttt{log(Mean.VedBA)} are obtained when the sheep is running. It also can be noticed that the four behaviors have different duration times: the inactive state is the one that lasts the longest, followed by the Foraging state, then the Walk behavior and finally the Fast Walk state. 

Figure \ref{behadist} displays the boxplots obtained considering all the labelled data available for each behavior for the three variables considered. It can be observed that for the three variables the distributions are fairly different between the four states. Even though the values of \texttt{log(Mean.VedBA)} for the Walk and Foraging states are similar, for the Inactive behavior are clearly lower and for the Fast Walk behavior higher. In regards to the \texttt{Mean.Pitch}, the Fast Walk and Inactive behaviors show similar distributions, but for the Walk state the values of \texttt{Mean.Pitch} are lower and for the Foraging behavior are even minor (with dispersion). Finally, in spite of the fact that 
the values of \texttt{log(Sd.Pitch)} present greater dispersion, they are well differentiated between the four classes.

\begin{figure}
\centering
\includegraphics[scale = 0.5]{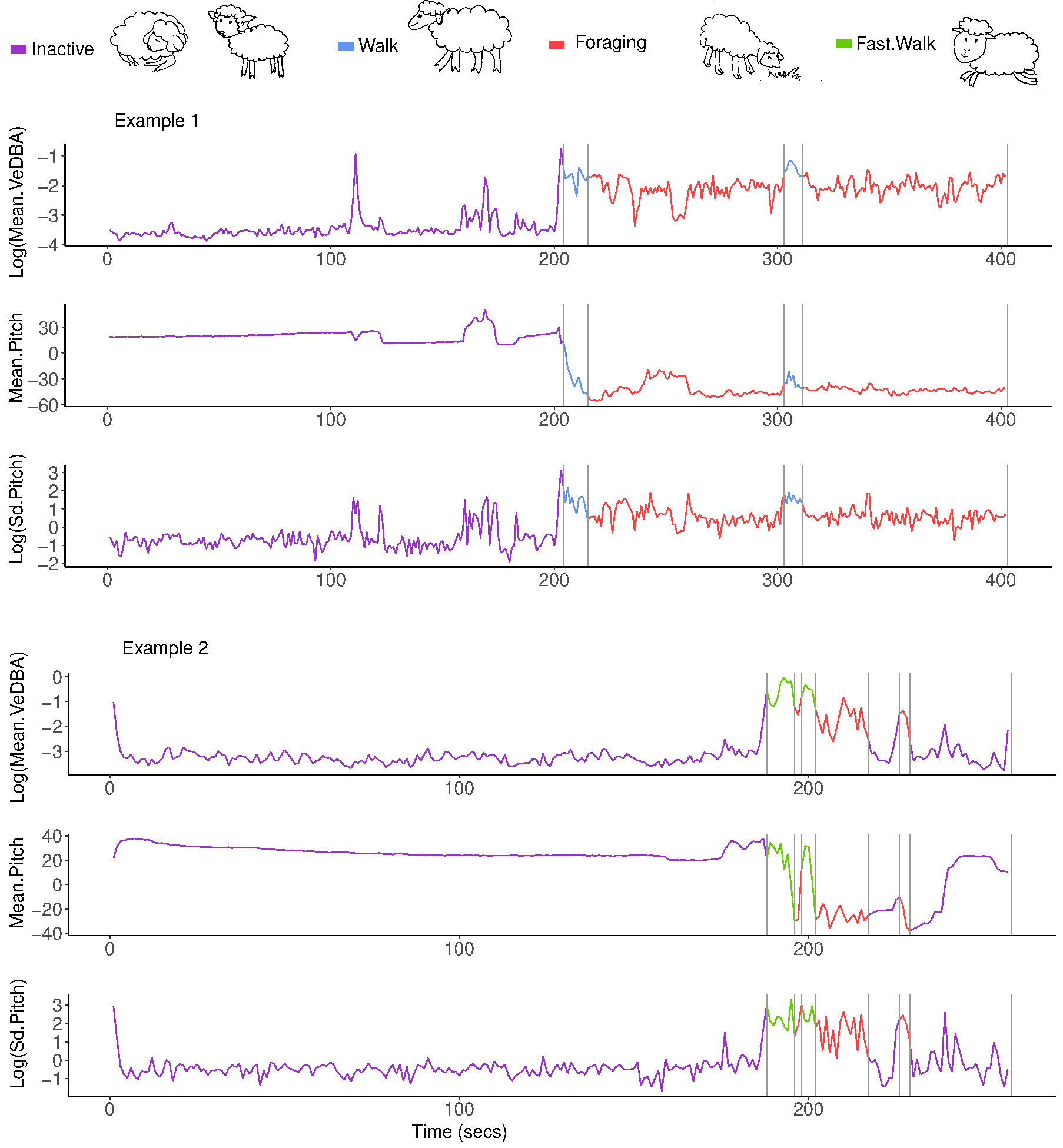}
\caption{Two examples of the three features calculated from the acceleration signal logged from one sheep: \texttt{log(Mean.VedBA)}, \texttt{Mean.Pitch} and \texttt{log(Sd.Pitch)}. Different colors represent the signals corresponding to each of the four behaviors: Inactive (rest or vigilance), Walk, Foraging and Fast Walk. }
\label{patternsexample}
\end{figure}

\begin{figure}
\centering
\includegraphics[scale = 0.9]{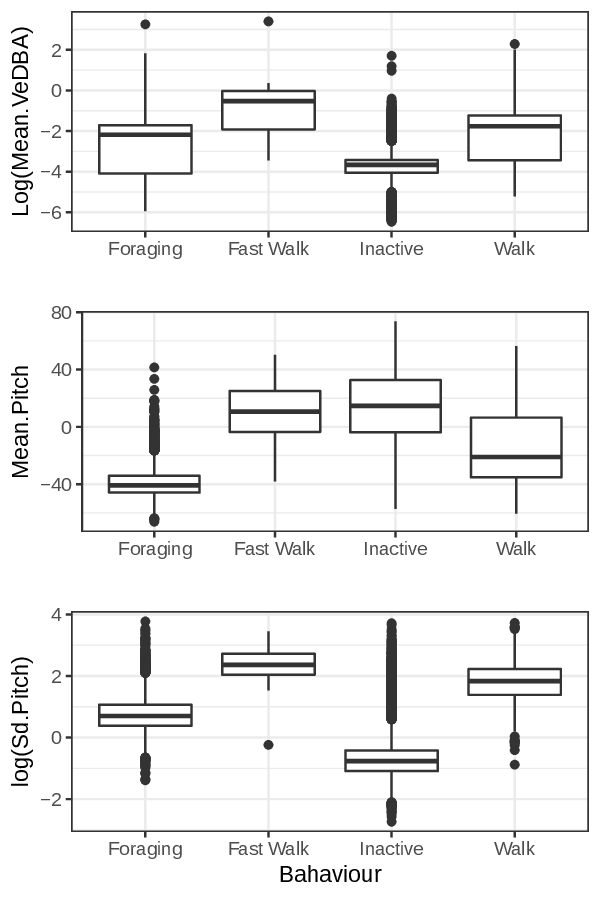}
\caption{Boxplots of the \texttt{log(Mean.VedBA)}, \texttt{Mean.Pitch} and \texttt{log(Sd.Pitch)} for each of the four behaviors:  Inactive (rest or vigilance), Walk, Foraging and Fast Walk.} 
\label{behadist}
\end{figure}

\begin{figure}[htp]
\centering
\includegraphics[scale = 0.8]{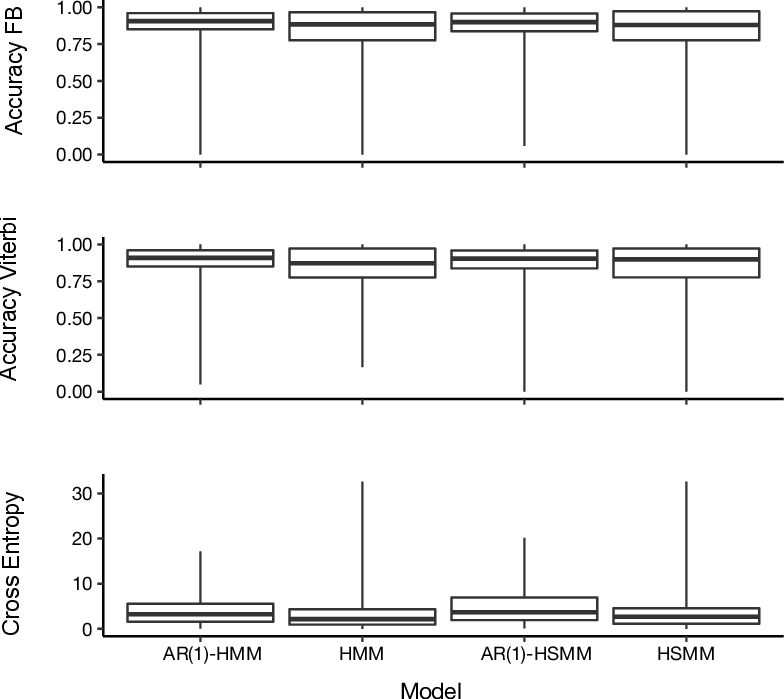}
\caption{Boxplots of the accuracy values of both decoding algorithms and Cross Entropy index obtained by the \textit{leave-one time series-out} cross-validation analysis for the four models considered: AR(1)-HMM, HMM, AR(1)-HSMM and HSMM.} 
\label{allResults}
\end{figure}

Figure \ref{allResults} shows the values of the three classification indices calculated for each model. In all cases the models behave in a similar way and predict correctly. However, the models with autoregressive structure (AR(1)-HMM and AR(1)-HSMM) show less dispersion in all indices, suggesting less uncertainty over the classifications. When comparing the performances of the FB algorithm and the Viterbi algorithm no significant differences can be appreciated in the values of the accuracy index, suggesting that global and local decoding perform similarly in terms of classifications.

\begin{figure}
\centering
\includegraphics[scale = 0.45]{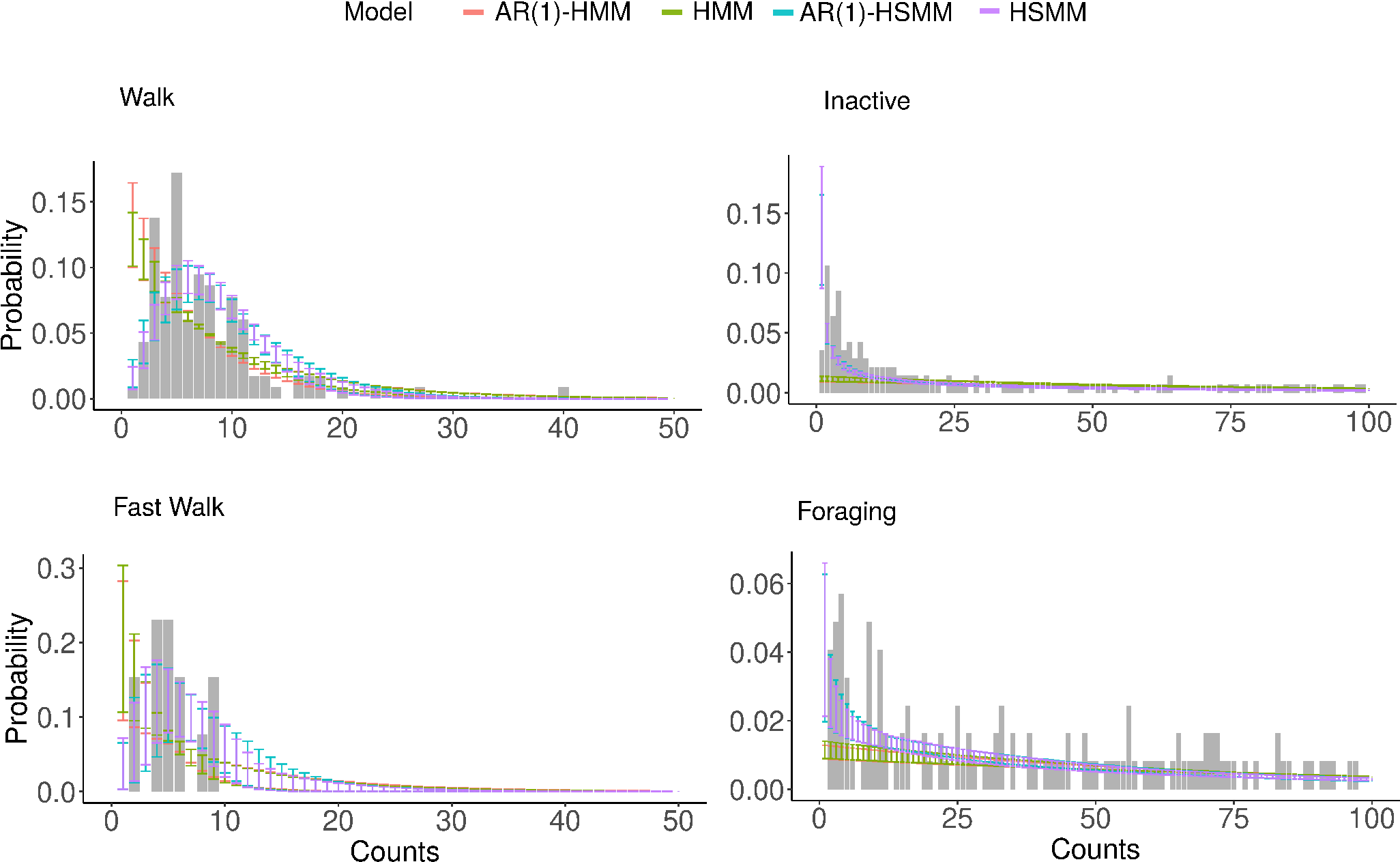}
\caption{Fitted distributions of the sojourn times for each behavior. The histogram displays the empirical distribution of the observed sojourn times while the four fitted models are displayed in color. The error bars indicate the pointwise estimates of the first and third quartile.} 
\label{Ndfs}
\end{figure}

The estimated RMSE values(given in Appendix S4) reveal that there are no important differences in the predictive capacity between the four models. The estimated RMSE behaves similarly across the three variables, except for the case of \texttt{Log(Mean.VeDBA)} for which the non autoregressive models (HMM and HSMM) show less uncertainty.
However, when analyzing the goodness of fit of the hidden process, it is clear that the estimations of the sojourn times of the HMMs differ from the estimations of the HSMMs (Figure \ref{Ndfs}). The empirical distribution of the sojourn times for the Walk and the Fast Walk behaviors evidence that the most likely dwell time is different from one, indicating that modeling these times with a geometric distribution would not be appropriate. Although in the case of the Inactive and the Foraging behaviors this pattern is less clear, the estimations of the HSMMs provide better approximations of the observed process. 

\section{Discussion}

We presented an overview of how to use HMMs and HSMMs with or without autoregressive structure in the observation process to perform supervised classification of time series data. We described how to extend an HMM to an HSMM, we detailed the model structures and presented the algorithms to predict the hidden states.
We then studied cases under which the classifications from both models applied to time series data truly differ.

When the final aim is to conduct state classification, from the simulation study we can conclude that when the state-dependent distributions have high overlap and the most likely dwell time is clearly different from one, 
unlike the geometric distribution, HSMMs outperform HMMs as the accuracy index is higher and the associated uncertainty is lower. However, when the overlap between the observation distributions is small or when the difference between the means of the sojourn time distributions is at least 25-35 units of time (regardless of their shape), the difference between models are minor. 

We applied two HMMs and two HSMMs to sheep accelerometer data obtaining accurate predictions from all of them: in all cases we obtained values of accuracy index greater than 0.87 on average. Nonetheless, we could not see important differences between the performance of the four models as the error measures were similar. According to the simulation study results, this could be explained mainly due to the low overlap between the observation distributions, although the estimations of the sojourn time distributions were more consistent with the semi-Markovian approach. 
However, due to the manner in which the data were collected, it was not possible to obtain enough information to distinguish between vigilance and resting thus having to consider both as just one behavior (Inactive). The acceleration data from these two behaviors is practically identical, however the times that sheep spend in these two behaviors are vastly different: long rest times and much shorter vigilance times. We posit that if the data set could be more complete, extending the recording sessions in order to observe enough resting samples, different performance should be achieved between the four models.  

When it comes to decide which model to use, the main consideration is the final goal of the analysis. When the aim is to conduct classification, the focus is on the ability to distinguish between different categories to then predict the unobserved states.
However, if the aim is to also provide a mechanism for data generation, the robustness of the model hypotheses becomes relevant. In those cases, correct specifications and accurate estimations should be taken into account to achieve the objectives of the analysis. 

When the aim is to select a model with good classification performance, various degrees of model misspecification may not result in state predictions that differ greatly from those obtained from the correctly specified model. For example, from our simulations it can be concluded that if the mean of the sojourn time distributions differ in more than 25-35 units of time, simple models incorrectly specified can perform equally as well as those that are correctly specified. However, if there is a high overlap between the state distributions, correctly specified models will improve the performance of the analysis.  

When the goal is to classify observations from time series data, HMMs and HSMMs resulted to be appropriate as they explicitly include sequential correlation into the analysis. Due to the structure of these models, they allow to distinguish between the sequential  dependence present in the data and the temporal structure of consecutive states. The difference between an HMM and an HSMM is that the latter allows for explicit duration modeling of the behavioral process. When there is significant overlap between the observation distributions and the sojourn times distributions do not have geometric form, considering extending the model to improve the model specifications should benefit the accuracy of the classifications. However, when the observation distributions are different between states and even more, if the sojourn time distributions also differ, even if the model fails in some assumptions, the classifications from an HMM can be as accurate as the classifications from an HSMM that is better specified. In these cases, we recommend the use of HMMs even if they are not correctly specified. As HSMMs are more complex models with a larger number of parameters than HMMs, they come at the cost of a more computationally expensive inferential process.

In practice when working with animal acceleration data observed labels may not be perfectly known. In certain cases, animals can be monitored in a laboratory setting \citep{wilson2008}, reducing the possibility of noisy labels. However, in other cases when the animal being studied cannot be observed in a controlled environment, it is not easy to collect a robust labelled set. The presence of noisy labels can affect the classification performance leading to lower prediction accuracy \citep{frenay}. Many studies have been devoted to the study of label noise and the development of techniques to deal with it \citep{Saez,GARCIA2015108,wang2021fair}, nevertheless until now there are no studies that consider this point to conduct classification of acceleration data. The inclusion of measurement error in the state process could be an interesting point for future research in the area of supervised classification with HMMs and HSMMs. 

In addition, more often than not, labelled data will not be available. Although the use of HMMs and HSMMs under the unsupervised approach is typically applied for different purposes than in classification, they can be equally useful. They can be used to discover groups of similar patters in the data and to have an approximate representation of the data generating process. When trying to identify behaviors from acceleration data without a training set, as multiple movement modes can correspond to the same behavior, the estimated states may not map to specific behaviors \citep{leos_barajas_analysis_2017}. This last fact should be considered when interpreting the predicted states. 
The limitation in such unsupervised learning settings, is to get a clear measure of the classification performance, as there is no manner to validate the state labels \citep{trevor_hastie_elements_nodate}. The implementation of HMMs and their extensions within an unsupervised learning framework can be particularly interesting as a direction for future research.

\bibliography{referenciasHMM.bib}

\end{document}


\maketitle

\section{Bayesian Inference}

To conduct classification via supervised learning, both the observations and values of the states are known from the training data. To fit the values of the model under the Bayesian approach, the posterior distributions of the parameters are computed. In section 3.1 of the main manuscript, we present the joint posterior distribution of the parameters as the product of the complete data likelihood and the prior distributions 

$$f(\boldsymbol{\theta}|\boldsymbol{x}_{1:T},\boldsymbol{c}_{1:T})
\propto f(\boldsymbol{c}_{1:T},\boldsymbol{x}_{1:T},\theta)f(\theta)$$

Let us distinguish the different elements of the parameter vector. Be $\boldsymbol{\theta}=(\boldsymbol{\delta},\boldsymbol{\gamma},\boldsymbol{\theta}_{obs},\boldsymbol{\theta}_{d})$, where $\boldsymbol{\delta}$ is the parameter vector of the initial distributions, $\boldsymbol{\gamma}$ the parameter vector of the transition probability matrix, $\boldsymbol{\theta}_{obs}$ the parameter vector of the observation distributions and $\boldsymbol{\theta}_d$ the vector parameter of the sojourn time distributions. If we consider independent priors 

$$f(\boldsymbol{\theta})=f(\boldsymbol{\delta})f(\boldsymbol{\gamma})f(\boldsymbol{\theta}_{obs})f\boldsymbol{(\theta}_d)$$

the posterior distribution can then be split into separate components

$$
\displaystyle f(\boldsymbol{\theta}|\boldsymbol{x}_{1:T},\boldsymbol{c}_{1:T}) = f(\boldsymbol{\delta}|\boldsymbol{x}_{1:T},\boldsymbol{c}_{1:T}) f(\boldsymbol{\gamma}|\boldsymbol{x}_{1:T},\boldsymbol{c}_{1:T})f(\boldsymbol{\theta}_{obs}|\boldsymbol{x}_{1:T},\boldsymbol{c}_{1:T})f(\boldsymbol{\theta}_d|\boldsymbol{x}_{1:T},\boldsymbol{c}_{1:T})         
$$

In what follows we detail the formulation of each component 

\subsection*{Specifications for the transition and initial distributions}

If we consider $f(\boldsymbol{\gamma})$ to folow a Dirichlet distribution, i.e.  $f(\boldsymbol{\gamma})\sim \mathcal{D}(\kappa_{j1},\dots,\kappa_{jJ})$ for $j=1\dots J$, as the changes of state have multinomial distribution, the posterior distribution for $\gamma$ is also Dirichlet

$$f(\boldsymbol{\gamma}|\boldsymbol{x}_{1:T},\boldsymbol{c}_{1:T})\propto \mathcal{D}(\kappa_{j1}+\nu_{j1},\dots,\kappa_{jJ}+\nu_{jJ})$$

with $\nu_{ji}=\text{number of occurrences from state  } j \text{ to } i$. Similarly for $f(\boldsymbol{\delta}|\boldsymbol{x}_{1:T},\boldsymbol{c}_{1:T})$, consider $f(\boldsymbol{\delta})\sim \mathcal{D}(\omega_{1},\dots,\omega_{J})$, then

$$f(\boldsymbol{\delta}|\boldsymbol{x}_{1:T},\boldsymbol{c}_{1:T})\propto \mathcal{D}(\omega_{1}+\tau_{1},\dots,\omega_{J}+\tau_{J})$$

with $\tau_{j}=\text{ number of times an observed series begins with state } j $

\subsection*{Specifications for the observation distributions}
Given $J$ states, it is necessary to make inference about the parameters of $J$ state-dependent distributions
$f_1(x)\ldots f_J(x)$. Each $j=1\dots J$ group of parameters, $\boldsymbol{\theta}_{obs_j}$, is estimated using only the observations allocated to state $j$. Then 

$$f_j(\boldsymbol{\theta}_{obs}|\boldsymbol{x}_{1:T} ,\boldsymbol{c}_{1:T})=f_j(\boldsymbol{\theta}_{obs_j}|\boldsymbol{x}_{[j]})$$

where $\boldsymbol{x}_{[j]}=\{x_t / C_t=j\}$.


In this paper the state-dependent observations are assumed to be normally distributed

$$f_j(x)\sim N(\boldsymbol{\mu}_j,\boldsymbol{\Sigma}_j)$$

We consider independent Normal priors for the $J$ observation distributions. 
$f(\boldsymbol{\theta}_{obs_j})=f(\boldsymbol{\mu}_j)f(\boldsymbol{\Sigma}_j)$

$f(\boldsymbol{\mu}_j)\sim N(\boldsymbol{\mu}_{0j},\boldsymbol{\Sigma}_{0j})$ and $f(\boldsymbol{\Sigma}_j)\sim NT(\boldsymbol{\mu}_{0j},\boldsymbol{\Sigma}_{0j},0,\infty)$

where $NT$ indicates a truncated normal distribution. 
In this case, the posterior distributions do not have a closed form. They can be formulated as

$$f_j(\boldsymbol{\theta}_{obs_j}|\boldsymbol{x}_{1:T},\boldsymbol{c}_{1:T})\propto f(\boldsymbol{\theta}_{obs_j})\prod_ {t / C_t=j}f_j(x_t)$$

\subsection*{Specifications for the sojourn time distributions}

Once again, it is necessary to make inference about the parameters of $J$ state-dependent distributions 
$d_1(x).\dots d_J(x)$. Each $j=1\dots J$ group of parameters, $\boldsymbol{\theta}_{d_j}$, is estimated using only the 
the dwell times of the observations allocated to state $j$. 


As we consider negative binomial distributions for the $J$ sojourn time distributions, the posterior distributions do not have a closed form. They can be formulated as

$$f_j(\boldsymbol{\theta}_{d_j}|\boldsymbol{x}_{1:T},\boldsymbol{c}_{1:T})\propto f(\boldsymbol{\theta}_{d_j})\prod_{\substack{r \text{ is NAT}\\\text{and }C_r=j}}d_{Cr}(u_r)$$

\section{Proofs}
In the following section some of the results used in the main manuscript are proven. With \textcolor{blue}{blue} are marked the considerations for the autoregressive case.

\subsection{From Local Decoding section}
\begin{lema}
For $t=2,\dots,T$  and $j=1,\dots,J$
\begin{equation*}
  \alpha_{t}(j)=\sum_{d \in \mathcal{D}}\sum_{\substack{i\neq j}} \alpha_{t-d}(i) \gamma_{ij}d_j(d)f_j(\boldsymbol{x}_{t-d+1:t})
\end{equation*}
\end{lema}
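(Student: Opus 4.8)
The plan is to work directly from the definition of the forward variable. In the hidden semi-Markov setting $\alpha_t(j)$ denotes the joint density of the observations up to time $t$ together with the event that a segment of state $j$ terminates exactly at time $t$; symbolically $\alpha_t(j) = f(\boldsymbol{x}_{1:t}, \text{a state-}j\text{ segment ends at } t)$. The recursion is then a law-of-total-probability argument. I would partition this event according to two quantities: the duration $d \in \mathcal{D}$ of the current state-$j$ segment, and the identity $i$ of the state occupied immediately before it. Saying that a state-$j$ segment ends at $t$ with duration $d$ means precisely that state $j$ is entered at time $t-d+1$ and that a segment of some state $i \neq j$ ended at time $t-d$. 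The events indexed by the pairs $(d,i)$ with $d \in \mathcal{D}$ and $i \neq j$ are mutually exclusive and exhaustive, so summing their densities reconstructs $\alpha_t(j)$.

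The core of the argument is factorizing each term of the double sum. Conditioning on the segment boundary falling at time $t-d$, I would invoke the semi-Markov structure to separate four contributions: first, the density of $\boldsymbol{x}_{1:t-d}$ together with the event that state $i$ ends at $t-d$, which is exactly $\alpha_{t-d}(i)$ by the definition above; second, the transition probability $\gamma_{ij}$, which depends only on the two adjacent states; third, the sojourn probability $d_j(d)$ that the freshly entered state $j$ persists for exactly $d$ steps, which depends only on $j$; and fourth, the emission term $f_j(\boldsymbol{x}_{t-d+1:t})$ for the observations generated while the process occupies state $j$. Multiplying these four factors and summing over all admissible $(d,i)$ yields the stated identity.

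The main obstacle is rigorously justifying the emission factorization and its independence from the past. I would make explicit that, given that state $j$ occupies times $t-d+1$ through $t$, the observations in that window are governed solely by the state-dependent law $f_j$ and are conditionally independent of $\boldsymbol{x}_{1:t-d}$ and of the earlier state path. In the autoregressive (blue) case this step needs extra care, because each observation depends on its predecessor even across a segment boundary, so $f_j(\boldsymbol{x}_{t-d+1:t})$ must be read as the segment-level joint density (possibly conditioning on $x_{t-d}$) rather than a product of marginals; the factorization therefore holds at the level of whole segments rather than observation by observation. A secondary technical point is the left boundary: when $t-d \le 0$ the symbol $\alpha_{t-d}(i)$ has no recursive meaning and must instead be handled by the initialization tied to the initial distribution $\boldsymbol{\delta}$, so I would either restrict the inner sum to $d < t$ and add the initial-segment term separately, or adopt the convention that $\alpha_0(i)$ encodes $\delta_i$, so that the recursion is well posed for every $t = 2,\dots,T$.
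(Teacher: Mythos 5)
Your proposal follows essentially the same route as the paper: partition the event that a state-$j$ segment ends at $t$ by its duration $d$ and the preceding state $i$, then factor each term into $\alpha_{t-d}(i)\,\gamma_{ij}\,d_j(d)\,f_j(\boldsymbol{x}_{t-d+1:t})$ using the semi-Markov conditional-independence structure, which is exactly the two-stage decomposition (first over $d$, then over $i$) carried out in the paper's proof. Your added remarks on the autoregressive emission term and the left boundary $t-d\le 0$ address details the paper leaves implicit, but they do not change the argument.
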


\begin{proof}

\begin{flalign*}
\alpha_{t}(j)&=\operatorname{Pr}(C_{t]}=j, \boldsymbol{X}_{1:t})\\ 
&=\sum_{d \in \mathcal{D}} \operatorname{Pr}(\boldsymbol{C}_{[t-d+1:t]}=j, \boldsymbol{X}_{1:t})\\
&=\sum_{d \in \mathcal{D}} \operatorname{Pr}(\boldsymbol{X}_{t-d+1:t}|\boldsymbol{X}_{1:t-d},\boldsymbol{C}_{[t-d+1:t]}=j)\operatorname{Pr}(\boldsymbol{X}_{1:t-d},\boldsymbol{C}_{[t-d+1:t]}=j)\\
&=\sum_{d \in \mathcal{D}}f_j(\boldsymbol{x}_{t-d+1:t}) \operatorname{Pr}(\boldsymbol{X}_{1:t-d},\boldsymbol{C}_{[t-d+1:t]}=j)
\end{flalign*}

And we have 

\begin{flalign*}
\operatorname{Pr}(\boldsymbol{X}_{1:t-d},\boldsymbol{C}_{[t-d+1:t]}=j)&= \operatorname{Pr}(\boldsymbol{X}_{1:t-d}|\boldsymbol{C}_{[t-d+1:t]}=j)d_j(d)\\
&=\displaystyle\sum_{ i \neq j} \operatorname{Pr}(\boldsymbol{X}_{1:t-d},C_{t-d]}=i|\boldsymbol{C}_{[t-d+1:t]}=j)d_j(d)\\
&=\displaystyle\sum_{ i \neq j} \operatorname{Pr}(\boldsymbol{X}_{1:t-d},C_{t-d]}=i)\gamma_{ij}d_j(d)\\
&=\displaystyle\sum_{ i \neq j} \alpha_{t-d}(i)\gamma_{ij}d_j(d)\\
\end{flalign*}
\end{proof}

\begin{lema}
For $t=2,\dots,T$  and $j=1,\dots,J$
\begin{equation*}
  \beta_{t}(j)=\sum_{d \in \mathcal{D}}\sum_{\substack{i\neq j}} \beta_{t+d}(i) \gamma_{ji}d_i(d)f_i(\boldsymbol{x}_{t+1:t+d}) 
\end{equation*}
\end{lema}

\begin{proof}

\begin{equation*}
\begin{aligned}
\beta_{t}(j)&=\operatorname{Pr}(\boldsymbol{X}_{t+1:T}|C_{t]}=j,\boldsymbol{\textcolor{blue}{X_{t-p:t}}})\\
&=\sum_{d \in \mathcal{D}}\sum_{i \neq j}\operatorname{Pr}(\boldsymbol{X}_{t+1:T},\boldsymbol{C}_{[t+1:t+d]}=i|C_{t]}=j,\boldsymbol{\textcolor{blue}{X_{t-p:t}}})\\
&=\sum_{d \in \mathcal{D}}\sum_{i \neq j}\operatorname{Pr}(\boldsymbol{X}_{t+1:T}|\boldsymbol{C}_{[t+1:t+d]}=i,C_{t]}=j,\boldsymbol{\textcolor{blue}{X_{t-p:t}}})
\operatorname{Pr}(\boldsymbol{C}_{[t+1:t+d]}=i|C_{t]}=j,\boldsymbol{\textcolor{blue}{X_{t-p:t}}})\\
&=\sum_{d \in \mathcal{D}}\sum_{i \neq j}\operatorname{Pr}(\boldsymbol{X}_{t+1:t+d},\boldsymbol{X}_{t+d+1:T}|\boldsymbol{C}_{[t+1:t+d]}=i,C_{t]}=j,\boldsymbol{\textcolor{blue}{X_{t-p:t}}})\gamma_{ji}d_i(d)\\
&=\sum_{d \in \mathcal{D}}\sum_{i \neq j}\operatorname{Pr}(\boldsymbol{X}_{t+d+1:T}|\boldsymbol{X}_{t+1:t+d},\boldsymbol{C}_{[t+1:t+d]}=i,C_{t]}=j,\boldsymbol{\textcolor{blue}{X_{t-p:t}}})\operatorname{Pr}(\boldsymbol{X}_{t+1:t+d}|\boldsymbol{C}_{[t+1:t+d]}=i,C_{t]}=j,\boldsymbol{\textcolor{blue}{X_{t-p:t}}})
\gamma_{ji}d_i(d)\\
&=\sum_{d \in \mathcal{D}}\sum_{i \neq q}\operatorname{Pr}(\boldsymbol{X}_{t+d+1:T}|\boldsymbol{\mathsf{X}}_\mathsf{t+d+1-p:t+d},C_{t+d]}=i)\gamma_{ji}d_i(d)f_i(\boldsymbol{x}_{t+1:t+d}) 
\\
&=\sum_{d \in \mathcal{D}}\sum_{i \neq j}\beta_{t+d}(i)
\gamma_{ji}d_i(d)f_i(\boldsymbol{x}_{t+1:t+d})\\
\end{aligned}
\end{equation*}
\end{proof}

\begin{lema}
Given $\beta^*_q(j)=\operatorname{Pr}(\boldsymbol{X}_{t+1:T}|C_{[t+1}=j,\boldsymbol{\mathsf{X}}_\mathsf{t-p:t})$, for $t=2,\dots,T$ and $j=1,\dots,J$
$$\beta_t^*(j)=\displaystyle\sum_{d in \mathcal{D}} d_j(d)\beta_{t+d}(j)f_j(\boldsymbol{x}_{t+1:t+d})$$
\end{lema}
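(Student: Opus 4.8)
The plan is to follow the same conditioning scheme as the proof of the preceding lemma for $\beta_t(j)$, with one simplification: the conditioning event is now that state $j$ \emph{begins} at $t+1$ (written $C_{[t+1}=j$) rather than \emph{ends} at $t$ (written $C_{t]}=j$). Because the state occupying the block that starts at $t+1$ is thereby fixed to be $j$, there is no incoming transition to account for, so the outer sum over $i\neq j$ and the transition factor $\gamma_{ji}$ that appear in $\beta_t(j)$ both drop out, leaving only a sum over the sojourn length $d\in\mathcal{D}$.

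Concretely, first I would partition the conditioned event by the duration $d$ of the run of state $j$ that starts at $t+1$, writing $\boldsymbol{C}_{[t+1:t+d]}=j$ for the event that this run occupies exactly the indices $t+1,\dots,t+d$:
\[
\beta_t^*(j)=\sum_{d\in\mathcal{D}}\operatorname{Pr}\bigl(\boldsymbol{X}_{t+1:T},\boldsymbol{C}_{[t+1:t+d]}=j\mid C_{[t+1}=j,\boldsymbol{X}_{t-p:t}\bigr).
\]
Then I would apply the chain rule, factoring the joint probability into $\operatorname{Pr}(\boldsymbol{X}_{t+1:T}\mid \boldsymbol{C}_{[t+1:t+d]}=j,\dots)$ times $\operatorname{Pr}(\boldsymbol{C}_{[t+1:t+d]}=j\mid C_{[t+1}=j,\dots)$. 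The latter factor is the sojourn probability $d_j(d)$, since, given that state $j$ starts at $t+1$, the chance that its run has length exactly $d$ is $d_j(d)$. For the former factor I would split the observations $\boldsymbol{X}_{t+1:T}$ into the within-segment block $\boldsymbol{X}_{t+1:t+d}$ and the post-segment block $\boldsymbol{X}_{t+d+1:T}$ and again use the chain rule; the within-segment block contributes the emission density $f_j(\boldsymbol{x}_{t+1:t+d})$.

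The crux, and the step I expect to be the main obstacle, is collapsing the post-segment factor $\operatorname{Pr}(\boldsymbol{X}_{t+d+1:T}\mid\boldsymbol{X}_{t+1:t+d},\boldsymbol{C}_{[t+1:t+d]}=j,\dots)$ into $\beta_{t+d}(j)$. This requires the semi-Markov conditional-independence structure: once the run of state $j$ is known to terminate at $t+d$ (so $C_{t+d]}=j$), the future observations $\boldsymbol{X}_{t+d+1:T}$ are independent of everything strictly before $t+d$, except---in the autoregressive case---for the $p$ most recent lags $\boldsymbol{X}_{t+d-p:t+d}$ that drive the AR emissions. This is exactly the quantity $\beta_{t+d}(j)=\operatorname{Pr}(\boldsymbol{X}_{t+d+1:T}\mid C_{t+d]}=j,\boldsymbol{X}_{t+d-p:t+d})$ appearing in the previous lemma, and I must check that the needed lags are supplied by the block $\boldsymbol{X}_{t+1:t+d}$ together with the original conditioning on $\boldsymbol{X}_{t-p:t}$ (automatic when $d\ge p$, and otherwise recovered from the retained prefix).

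Assembling the three factors $d_j(d)$, $f_j(\boldsymbol{x}_{t+1:t+d})$, and $\beta_{t+d}(j)$ inside the sum over $d$ then yields the claim. As a shortcut, I note that this identity is already implicit in the proof of the previous lemma: factoring $\gamma_{ji}$ out of that derivation leaves exactly $\sum_{d\in\mathcal{D}} d_j(d)\beta_{t+d}(i)f_i(\boldsymbol{x}_{t+1:t+d})$, so that $\beta_t(j)=\sum_{i\neq j}\gamma_{ji}\beta_t^*(i)$ and $\beta_t^*$ is precisely this inner building block.
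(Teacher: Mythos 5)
Your proposal is correct and follows essentially the same route as the paper's own proof: partition the event by the sojourn duration $d$ of the run starting at $t+1$, peel off $d_j(d)$, split the observations into the within-segment block (giving $f_j(\boldsymbol{x}_{t+1:t+d})$) and the post-segment block (collapsing to $\beta_{t+d}(j)$ by the semi-Markov conditional independence, with the AR lags handled exactly as you describe). Your closing remark that $\beta_t(j)=\sum_{i\neq j}\gamma_{ji}\beta_t^*(i)$ is also the identity the paper itself invokes later in the proof of its Proposition 4.
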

\begin{proof}
\begin{equation*}
\begin{aligned}
\beta_t^*(j)&=\operatorname{Pr}(\boldsymbol{X}_{t+1:T}|C_{[t+1}=j,\boldsymbol{\textcolor{blue}{X_{t-p:t}}})\\
&=\sum_{d \in \mathcal{D}}\operatorname{Pr}(\boldsymbol{X}_{t+1:T},\boldsymbol{C}_{[t+2:t+d]}=j|C_{[t+1}=j,\boldsymbol{\textcolor{blue}{X_{t-p:t}}})\\
&=\sum_{d \in \mathcal{D}}\operatorname{Pr}(\boldsymbol{X}_{t+1:T}|,\boldsymbol{C}_{[t+1:t+d]}=j,\boldsymbol{\mathsf{X}}_\mathsf{t-p:t})\operatorname{Pr}(\boldsymbol{C}_{[t+2:t+d]}=j|,C_{[t+1}=j,\boldsymbol{\textcolor{blue}{X_{t-p:t}}})\\
&=\sum_{d \in \mathcal{D}}\operatorname{Pr}(\boldsymbol{X}_{t+1:t+d},\boldsymbol{X}_{t+d+1:T}|,\boldsymbol{C}_{[t+1:t+d]}=j,\boldsymbol{\textcolor{blue}{X_{t-p:t}}})d_j(d)\\
&=\sum_{d \in \mathcal{D}}\operatorname{Pr}(\boldsymbol{X}_{t+d+1:T}|\boldsymbol{C}_{[t+1:t+d]}=j,\boldsymbol{\textcolor{blue}{X_{t-p:t+p}}})\operatorname{Pr}(\boldsymbol{X}_{t+1:t+d}|\boldsymbol{C}_{[t+1:t+d]}=j,\boldsymbol{\textcolor{blue}{X_{t-p:t}}})
d_j(d)\\
&=\sum_{d \in \mathcal{D}}\operatorname{Pr}(\boldsymbol{X}_{t+d+1:T}|C_{t+d]}=j,\boldsymbol{\textcolor{blue}{X_{t+1+d-p:t+d}}})d_j(d)f_j(\boldsymbol{x}_{t+1:t+d})\\
&=\sum_{d \in \mathcal{D}}\beta_{t+d}(j)d_j(d)f_j(\boldsymbol{x}_{t+1:t+d})\\
\end{aligned}
\end{equation*}
\end{proof}

\begin{lema}
Given the definitions of $\xi_{t}(j)$, $\alpha_{t}(j)$ and $\beta_t^*(j)$, for $t=2,\dots,T$  and $j=1,\dots,J$
\begin{equation*}
\xi_{t}(j)=\xi_{t+1}(j)+\alpha_{t}(j)\sum_{i\neq j}\gamma_{ji}\beta_t^*(i)-\beta_t^*(j)\sum_{i\neq j}\alpha_t(i)\gamma_{ij}
\end{equation*}
\label{ld3}
\end{lema}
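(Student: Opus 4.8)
The plan is to recall that $\xi_t(j)=\operatorname{Pr}(C_t=j,\boldsymbol{X}_{1:T})$ is the (un-normalised) probability that the chain occupies state $j$ at time $t$ jointly with the whole observation sequence, and to obtain the stated recursion by comparing the occupancy events at $t$ and at $t+1$. The key observation is that $\{C_t=j\}$ and $\{C_{t+1}=j\}$ overlap precisely on the event ``stay in $j$'', $\{C_t=j,C_{t+1}=j\}$, since in this semi-Markov model a state cannot transition to itself. I would therefore first write the two disjoint decompositions
$$\{C_t=j\}=\{C_t=j,\,C_{t+1}=j\}\cup\{C_{t]}=j,\,C_{[t+1}\neq j\},$$
$$\{C_{t+1}=j\}=\{C_t=j,\,C_{t+1}=j\}\cup\{C_{t]}\neq j,\,C_{[t+1}=j\},$$
intersect both with $\{\boldsymbol{X}_{1:T}\}$, take probabilities, and subtract. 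The common ``stay'' term cancels, leaving
$$\xi_t(j)-\xi_{t+1}(j)=\operatorname{Pr}(C_{t]}=j,C_{[t+1}\neq j,\boldsymbol{X}_{1:T})-\operatorname{Pr}(C_{t]}\neq j,C_{[t+1}=j,\boldsymbol{X}_{1:T}).$$

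Next I would expand each boundary probability over the complementary state $i\neq j$ and factor it across the time-$t$ boundary by the chain rule. For the ``leaving'' term I expect $\operatorname{Pr}(C_{t]}=j,C_{[t+1}=i,\boldsymbol{X}_{1:T})=\alpha_t(j)\,\gamma_{ji}\,\beta_t^*(i)$, where $\alpha_t(j)=\operatorname{Pr}(C_{t]}=j,\boldsymbol{X}_{1:t})$ contributes the past, $\gamma_{ji}$ the transition $j\to i$, and $\beta_t^*(i)$ the future, matching its definition in the preceding proposition. Symmetrically, the ``entering'' term should give $\operatorname{Pr}(C_{t]}=i,C_{[t+1}=j,\boldsymbol{X}_{1:T})=\alpha_t(i)\,\gamma_{ij}\,\beta_t^*(j)$. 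Summing over $i\neq j$ and pulling the factors $\alpha_t(j)$ and $\beta_t^*(j)$ out of their respective sums reproduces $\alpha_t(j)\sum_{i\neq j}\gamma_{ji}\beta_t^*(i)-\beta_t^*(j)\sum_{i\neq j}\alpha_t(i)\gamma_{ij}$, which is the claim.

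The step I expect to be the main obstacle is justifying these two factorizations, that is, the conditional independence that cleanly separates the observation likelihood into an $\alpha$-part and a $\beta^*$-part with only $\gamma_{ji}$ (or $\gamma_{ij}$) in between. Two identities must be checked. The transition factor requires $\operatorname{Pr}(C_{[t+1}=i\mid C_{t]}=j,\boldsymbol{X}_{1:t})=\gamma_{ji}$, i.e.\ that once a segment closes at $t$ the next segment's label depends on the past only through the ending state $j$; this is the semi-Markov property. The future factor requires $\operatorname{Pr}(\boldsymbol{X}_{t+1:T}\mid C_{[t+1}=i,C_{t]}=j,\boldsymbol{X}_{1:t})=\beta_t^*(i)$, and this is the delicate point in the autoregressive (blue) case: since $\beta_t^*(i)$ conditions on $\boldsymbol{X}_{t-p:t}$, I must verify that conditioning on the segment start $C_{[t+1}=i$ collapses the dependence of the future on $\boldsymbol{X}_{1:t}$ down to exactly the last $p$ observations, which is what the definition of $\beta_t^*$ already encodes. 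Once both identities are established, the telescoping set-up and the remaining algebra are routine.
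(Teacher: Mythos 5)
Your proposal is correct and follows essentially the same route as the paper: the same inclusion--exclusion identity $\xi_t(j)-\xi_{t+1}(j)=\operatorname{Pr}(C_{t]}=j,\boldsymbol{X}_{1:T})-\operatorname{Pr}(C_{[t+1}=j,\boldsymbol{X}_{1:T})$ obtained from the two ``stay plus boundary'' decompositions, followed by summing over the entering/leaving state $i\neq j$ and factorizing each boundary probability across time $t$. The only difference is one of packaging: the paper reaches the factor $\alpha_t(j)\gamma_{ji}\beta_t^*(i)$ by first expanding over the sojourn length $d$, applying its Lemma~\ref{lem1}, and then resumming via the recursion $\beta_t^*(i)=\sum_{d}d_i(d)f_i(\boldsymbol{x}_{t+1:t+d})\beta_{t+d}(i)$, whereas you factor directly against the probabilistic definition of $\beta_t^*$ --- the two conditional-independence checks you flag (the semi-Markov transition property and the collapse of the past to $\boldsymbol{X}_{t-p:t}$) are precisely what that lemma encapsulates.
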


\begin{proof}
We need the following equations for the proof 
\begin{itemize}
    \item $\operatorname{Pr}(C_t=j,\boldsymbol{X}_{1:T})=\operatorname{Pr}(C_{t+1}=j,\boldsymbol{X}_{1:T})+\operatorname{Pr}(C_{t]}=j,\boldsymbol{X}_{1:T})-\operatorname{Pr}(C_{[t+1}=j,\boldsymbol{X}_{1:T})$
    \item $\operatorname{Pr}(C_{t]}=i, \boldsymbol{C}_{[t+1:t+d]}=j,\boldsymbol{X}_{1:T})=\alpha_t(i)\gamma_{ij}d_j(d)f_j(\boldsymbol{x}_{t+1:t+d})\beta_{t+d}(j)$ (Lemma \ref{lem1})
\end{itemize}

Then

\begin{equation*}
\begin{aligned}
\xi_{t}(j)&=\operatorname{Pr}(C_t=j,\boldsymbol{X}_{1:T})\\
&=\operatorname{Pr}(C_{t+1}=j,\boldsymbol{X}_{1:T})+\operatorname{Pr}(C_{t]}=j,\boldsymbol{X}_{1:T})-\operatorname{Pr}(C_{[t+1}=j,\boldsymbol{X}_{1:T})\\
&=\xi_{t+1}(j)+\operatorname{Pr}(C_{t]}=j,\boldsymbol{X}_{1:T})-\operatorname{Pr}(C_{[t+1}=j,\boldsymbol{X}_{1:T})\\ 
&=\xi_{t+1}(j)+\sum_{i\neq j}\operatorname{Pr}(C_{t]}=j,C_{[t+1}=i,\boldsymbol{X}_{1:T})-\sum_{i\neq j}\operatorname{Pr}(C_{t]}=i,C_{[t+1}=j,\boldsymbol{X}_{1:T})\\
&=\xi_{t+1}(j)+\sum_{i\neq j}\sum_{d \in D}\operatorname{Pr}(C_{t]}=j,\boldsymbol{C}_{[t+1:t+d]}=i,\boldsymbol{X}_{1:T})
-\sum_{i\neq j}\sum_{d \in D}\operatorname{Pr}(C_{t]}=i,\boldsymbol{C}_{[t+1:t+d]}=j,\boldsymbol{X}_{1:T})\\
&=\xi_{t+1}(j)+\sum_{i\neq j}\sum_{d \in D} \alpha_t(j)\gamma_{ji}d_j(d)f_j(\boldsymbol{x}_{t+1:t+d})\beta_{t+d}(i)-\sum_{i\neq j}\sum_{d \in D} \alpha_t(i)\gamma_{ij}d_i(d)f_i(\boldsymbol{x}_{t+1:t+d})\beta_{t+d}(j)\\
&=\xi_{t+1}(j)+\alpha_t(j)\sum_{i\neq j}\gamma_{ji} \sum_{d \in D} d_j(d)f_j(\boldsymbol{x}_{t+1:t+d})\beta_{t+d}(i) -\sum_{i\neq j}\alpha_t(i)\gamma_{ij}\sum_{d \in D} d_i(d)f_i(\boldsymbol{x}_{t+1:t+d})\beta_{t+d}(j)\\ 
\end{aligned}
\end{equation*}

and finally, using that $\displaystyle\beta_t(j)=\sum_{i\neq j}\gamma_{ij}\beta_t^*(j)$,  we have 

$
\xi_{t}(j)=\xi_{t+1}(j)+\alpha_{t}(j)\displaystyle\sum_{i\neq j}\gamma_{ji}\beta_t^*(i)-\beta_t^*(j)\sum_{i\neq j}\alpha_t(i)\gamma_{ij}
$

\end{proof}

The following lemma was necessary to prove the previous proposition \ref{ld3}
\begin{lem}
$$\operatorname{Pr}(C_{t]}=i, \boldsymbol{C}_{[t+1:t+d]}=j,\boldsymbol{X}_{1:T})=\alpha_t(i)\gamma_{ij}d_j(d)f_j(\boldsymbol{x}_{t+1:t+d})\beta_{t+d}(j)$$
\label{lem1}
\end{lem}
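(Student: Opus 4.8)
The plan is to expand the left-hand side by splitting the full observation record $\boldsymbol{X}_{1:T}$ into the three natural blocks determined by the segment $\boldsymbol{C}_{[t+1:t+d]}=j$: the past $\boldsymbol{X}_{1:t}$, the segment itself $\boldsymbol{X}_{t+1:t+d}$, and the future $\boldsymbol{X}_{t+d+1:T}$. Applying the chain rule to the joint probability in this order produces three conditional factors, each of which collapses to one of the terms on the right-hand side once the conditional independence structure of the HSMM is invoked.

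First I would write
\begin{align*}
\operatorname{Pr}(C_{t]}=i, \boldsymbol{C}_{[t+1:t+d]}=j, \boldsymbol{X}_{1:T}) &= \operatorname{Pr}(\boldsymbol{X}_{t+d+1:T} \mid \boldsymbol{X}_{1:t+d}, C_{t]}=i, \boldsymbol{C}_{[t+1:t+d]}=j) \\
&\quad \times \operatorname{Pr}(\boldsymbol{X}_{t+1:t+d} \mid \boldsymbol{X}_{1:t}, C_{t]}=i, \boldsymbol{C}_{[t+1:t+d]}=j) \\
&\quad \times \operatorname{Pr}(C_{t]}=i, \boldsymbol{C}_{[t+1:t+d]}=j, \boldsymbol{X}_{1:t}).
\end{align*}
The first factor reduces to $\beta_{t+d}(j)$: because $\boldsymbol{C}_{[t+1:t+d]}=j$ forces the segment to end at $t+d$, so that $C_{t+d]}=j$, the future observations $\boldsymbol{X}_{t+d+1:T}$ are conditionally independent of the past given only the ending state (in the autoregressive case, given additionally the last $p$ observations), matching the definition of $\beta$. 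The second factor is exactly $f_j(\boldsymbol{x}_{t+1:t+d})$, since conditionally on the segment being in state $j$ the block of observations depends only on state $j$'s emission law.

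For the third factor I would apply the chain rule once more,
$$\operatorname{Pr}(C_{t]}=i, \boldsymbol{C}_{[t+1:t+d]}=j, \boldsymbol{X}_{1:t}) = \operatorname{Pr}(\boldsymbol{C}_{[t+1:t+d]}=j \mid C_{t]}=i, \boldsymbol{X}_{1:t}) \, \operatorname{Pr}(C_{t]}=i, \boldsymbol{X}_{1:t}),$$
recognizing $\operatorname{Pr}(C_{t]}=i, \boldsymbol{X}_{1:t}) = \alpha_t(i)$ by definition. The remaining conditional probability describes starting a fresh segment of state $j$ at time $t+1$ that lasts exactly $d$ steps, given that the previous segment of state $i$ ended at $t$; by the semi-Markov property this is independent of $\boldsymbol{X}_{1:t}$ and factors as the transition probability times the sojourn probability, $\gamma_{ij}\, d_j(d)$. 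Collecting the three factors yields $\alpha_t(i)\gamma_{ij}d_j(d)f_j(\boldsymbol{x}_{t+1:t+d})\beta_{t+d}(j)$.

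The main obstacle is justifying the conditional independences rigorously rather than merely asserting them — in particular the step that detaches the future observations from the entire past using only the ending state, and the step that separates transition from sojourn inside $\operatorname{Pr}(\boldsymbol{C}_{[t+1:t+d]}=j \mid C_{t]}=i, \boldsymbol{X}_{1:t})$. These hinge on the precise bracket semantics ($C_{t]}$ as ``a segment ends at $t$'' versus $\boldsymbol{C}_{[t+1:t+d]}$ as ``a segment of duration $d$ spans $t+1$ to $t+d$'') and, in the autoregressive variant, on carrying the correct block of lagged observations $\boldsymbol{X}_{t+d-p+1:t+d}$ into the conditioning set of $\beta_{t+d}(j)$. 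Since the computation mirrors the recursions already established for $\alpha$ and $\beta$, once the segment boundaries are handled carefully the factorization is routine.
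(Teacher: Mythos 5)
Your proposal is correct and follows essentially the same route as the paper: both proofs decompose the joint probability by the chain rule into the four factors $\alpha_t(i)$, $\gamma_{ij}d_j(d)$, $f_j(\boldsymbol{x}_{t+1:t+d})$ and $\beta_{t+d}(j)$, invoking the same conditional independence properties (the paper merely peels off $\alpha_t(i)$ first and the future observations last, whereas you proceed in the reverse order). The conditional independences you flag as needing care are asserted in the paper in exactly the same way, so nothing is missing relative to the published argument.
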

\begin{proof}
\begin{equation*}
\begin{aligned}
\operatorname{Pr}(C_{t]}=i, \boldsymbol{C}_{[t+1:t+d]}=j,\boldsymbol{X}_{1:T})&= \operatorname{Pr}(C_{t]}=i, \boldsymbol{C}_{[t+1:t+d]}=j,\boldsymbol{X}_{1:t},\boldsymbol{X}_{t+1:T})\\
&=\operatorname{Pr}(\boldsymbol{C}_{[t+1:t+d]}=j,\boldsymbol{X}_{t+1:T}|C_{t]}=i,\boldsymbol{X}_{1:t})\operatorname{Pr}(C_{t]}=i,\boldsymbol{X}_{1:t})\\
&=\operatorname{Pr}(\boldsymbol{C}_{[t+1:t+d]}=j,\boldsymbol{X}_{t+1:T}C_{t]}=i,\boldsymbol{X}_{1:t})\alpha_t(i)\\
&=\operatorname{Pr}(\boldsymbol{X}_{t+1:T}|C_{t]}=i,\boldsymbol{X}_{1:t},\boldsymbol{C}_{[t+1:t+d]}=j)\operatorname{Pr}(\boldsymbol{C}_{[t+1:t+d]}=j|C_{t]}=i,\boldsymbol{X}_{1:t})\alpha_t(i)\\
&=\operatorname{Pr}(\boldsymbol{X}_{t+1:t+d},\boldsymbol{X}_{t+d+1:T}|C_{t]}=i,\boldsymbol{X}_{1:t},\boldsymbol{C}_{[t+1:t+d]}=j)\gamma_{ij}d_j(d)\alpha_t(i)\\
&=\operatorname{Pr}(\boldsymbol{X}_{t+d+1:T}|C_{t]}=i,\boldsymbol{X}_{1:t},\boldsymbol{C}_{[t+1:t+d]}=j,\boldsymbol{X}_{t+1:t+d})\\
&\operatorname{Pr}(\boldsymbol{X}_{t+1:t+d}|C_{t]}=i,\boldsymbol{X}_{1:t},\boldsymbol{C}_{[t+1:t+d]}=j)
\gamma_{ij}d_j(d)\alpha_t(i)\\
&=\operatorname{Pr}(\boldsymbol{X}_{t+d+1:T}|C_{t+d]}=j
, \boldsymbol{\mathsf{X}}_\mathsf{t-p:t-1})
\operatorname{Pr}(\boldsymbol{X}_{t+1:t+d}|\boldsymbol{\mathsf{X}}_\mathsf{t+1-p:t+d-1})\gamma_{ij}d_j(d)\alpha_t(i)\\
&=\beta_{t+d}(j)f_j(\boldsymbol{x}_{t+1:t+d})\gamma_{ij}d_j(d)\alpha_t(i)
\end{aligned}
\end{equation*}
\end{proof}

\subsection{From Global Decoding section}

\begin{lema}
Be  
$$\psi_t(j,d)=\max_{c_{1:t-d}}\operatorname{Pr}\left(\boldsymbol{C}_{1:t-d},\boldsymbol{C}_{[t-d+1:t]=j},\boldsymbol{X}_{1:T}\right)$$
for $t=2,\dots,T$, $d=1,\dots, D$  and $j=1,\dots,J$, then 

\begin{equation*}
\psi_{t}(j,d)=\max_{\substack{i \neq j \\ d'\leq t} } \left\{\psi_{t-d}(i,d')\gamma_{ij}d_j(d) f_j(\boldsymbol{x}_{t-d+1:t})\right\}
\end{equation*}
\end{lema}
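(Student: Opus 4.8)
The plan is to mirror the derivation of the forward recursion for $\alpha_t(j)$ proved earlier in these notes, replacing every summation by a maximization and reading the joint probability, exactly as for the forward variable, as involving the observations accumulated through the current time $t$. The only genuinely new ingredient is the distributive identity $\max_x\bigl(a\,g(x)\bigr)=a\,\max_x g(x)$, valid for any constant $a\ge 0$; this holds precisely because all the factors that get pulled out of the maximum are probabilities and hence non-negative.

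First I would peel off the final sojourn from the definition of $\psi_t(j,d)$. The event $\boldsymbol{C}_{[t-d+1:t]}=j$ specifies a complete visit to state $j$ of length exactly $d$ ending at time $t$, so the preceding segment must terminate at time $t-d$ in some state $i\neq j$ after a sojourn of some length $d'\le t-d$. Writing the prefix path $\boldsymbol{C}_{1:t-d}$ as a path whose last segment ends at $t-d$ in state $i$ with duration $d'$, I can reorganize the outer maximum over $c_{1:t-d}$ as an outer maximum over the boundary labels $(i,d')$ together with an inner maximum over the earlier portion of the path.

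Next I would factor the joint probability, using the conditional-independence structure of the model in the same way as in Lemma~\ref{lem1}. Conditioning on the complete state-$j$ segment makes the within-segment observations $\boldsymbol{X}_{t-d+1:t}$ depend on the past only through state $j$, which yields the emission factor $f_j(\boldsymbol{x}_{t-d+1:t})$; the length of that segment contributes $d_j(d)$; the entry into state $j$ from the preceding state $i$ contributes $\gamma_{ij}$; and what remains is exactly the joint probability of the prefix path together with $\boldsymbol{X}_{1:t-d}$, ending at $t-d$ in state $i$ with duration $d'$. Since the first three factors do not involve the earlier path, the distributive identity lets me carry them outside the inner maximum, and the inner maximum over the prefix is by definition $\psi_{t-d}(i,d')$. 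Maximizing over $i\neq j$ and $d'\le t-d$ then delivers the stated recursion.

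I expect the factorization to be the main obstacle: I must verify carefully that conditioning on a full state-$j$ segment renders its observations conditionally independent of the entire past, giving a clean $f_j(\boldsymbol{x}_{t-d+1:t})$, and that the sojourn-length and entry-transition terms separate cleanly from the prefix probability. In the autoregressive setting the emission factor must instead be conditioned on the last $p$ observations, as indicated in blue elsewhere in this section, but the recursion is otherwise unchanged. Once the factorization is in place, the passage from $\sum$ to $\max$ is automatic, since multiplication by non-negative constants commutes with the maximum.
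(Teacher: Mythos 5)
Your proposal is correct and follows essentially the same route as the paper's own proof: decompose the prefix path $c_{1:t-d}$ by the state $i\neq j$ and duration $d'$ of its final segment, factor the joint probability into the prefix term times $\gamma_{ij}d_j(d)f_j(\boldsymbol{x}_{t-d+1:t})$ using the same conditional-independence argument as in Lemma~\ref{lem1}, and pull the non-negative constants out of the inner maximum to recognize $\psi_{t-d}(i,d')$. Your bound $d'\le t-d$ on the previous sojourn is arguably the cleaner one (the paper writes $d'\le d$ in the derivation and $d'\le t$ in the statement), but this is a minor indexing discrepancy, not a difference in method.
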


\begin{proof}

By definition we have 
\begin{equation*}
\begin{aligned}
\psi_{t-d}(i,d')=\max_{c_{1:t-d-d'}} \operatorname{Pr}(\boldsymbol{C}_{1:t-d-d'},\boldsymbol{C}_{[t-d-d'+1:t-d]}=i,\boldsymbol{X}_{1:t-d})
\end{aligned}
\end{equation*}
In the same way,
\begin{equation*}
\begin{aligned}
\psi_{t}(j,d)&=\max_{c_{1:t-d}} \operatorname{Pr}(\boldsymbol{C}_{1:t-d},\boldsymbol{C}_{[t-d+1:t]}=j,\boldsymbol{X}_{1:t})\\
&=\max_{\substack{c_{1:t-d-d'}\\c_{t-d'-d+1:t-d}\\d'\leq d}} \operatorname{Pr}(\boldsymbol{C}_{1:t-d'-d},\boldsymbol{C}_{t-d'-d+1:t-d},\boldsymbol{C}_{[t-d+1:t]}=j,\boldsymbol{X}_{1:t-d},\boldsymbol{X}_{t-d+1:t})\\
&=\max_{\substack{c_{1:t-d-d'}\\i\neq j\\d'\leq d}} \operatorname{Pr}(\boldsymbol{C}_{1:t-d'-d},\boldsymbol{C}_{[t-d'-d+1:t-d]}=i,\boldsymbol{C}_{[t-d:t]}=j,\boldsymbol{X}_{1:t-d},\boldsymbol{X}_{t-d+1:t})\\
&=\max_{\substack{c_{1:t-d-d'}\\i\neq j\\d'\leq d}} \operatorname{Pr}(\boldsymbol{C}_{1:t-d'-d},\boldsymbol{C}_{[t-d'-d+1:t-d]}=i,\boldsymbol{X}_{1:t-d})\gamma_{ij}d_j(d)\operatorname{Pr}(\boldsymbol{X}_{t-d+1:t}|\boldsymbol{\textcolor{blue}{X_{t-d+1-p:t+d}}})\\
&=\max_{\substack{c_{1:t-d-d'}\\i\neq j\\d'\leq d}} \operatorname{Pr}(\boldsymbol{C}_{1:t-d'-d},\boldsymbol{C}_{[t-d'-d+1:t-d]}=i,\boldsymbol{X}_{1:t-d})\gamma_{ij}d_j(d)f_j(\boldsymbol{x}_{t-d+1:t})\\
&=\max_{\substack{i\neq j\\d'\leq d}} \psi_{t-d}(i,d')\gamma_{i,j}d_j(d)f_j(\boldsymbol{x}_{t-d+1:t})
\end{aligned}
\end{equation*}

\end{proof}

\newpage

\section{Prior distributions}

In this section we show the priors used to conduct the simulation study (table \ref{priorsSimu}) and the sheep acceleration data analysis (table \ref{PriorsReal})

\begin{table}[H]
\begin{tabular}{p{1.5 cm}p{4 cm}p{10cm}}
Parameter & Prior Distribution  & Interpretation    \\
\cline{1-3} 
$\mu_i$ & Normal(0,5) & Mean for the Normal distribution describing the observation given state $i$. \\
$\sigma_i$ & Trunc normal(0,5) & Standard deviation for the Normal distribution describing the observation given state $i$.   \\
$m_i$ & Trunc. normal(20,50) & Mean for the Negative binomial distributions describing the sojourn time of each state.\\
$k_i$ & Trunc. normal(20,50) & Dispersion parameter for the Negative binomial distributions describing the sojourn time of each state.\\
\end{tabular}
\caption{Prior distributions used to conduct the simulation study}
\label{priorsSimu}
\end{table}

\begin{table}[H]
\begin{tabular}{p{1.5 cm}p{7 cm}p{7cm}}
Parameter & Prior Distribution  & Interpretation    \\
\cline{1-3} 
$\boldsymbol{\alpha}_1$ & $\text{MultiNormal}((0,20,0)^T, diag(3,5,10))$& Mean vector for the Normal distribution describing the observation for the walk behaviour \\
$\boldsymbol{\alpha}_2$ & $\text{MultiNormal}((0,20,0)^T, diag(3,5,10))$& Mean vector for the Normal distribution describing the observation for the fast walk behaviour \\
$\boldsymbol{\alpha}_3$ & $\text{MultiNormal}((0,20,0)^T, diag(3,5,10))$& Mean vector for the Normal distribution describing the observation for the inactive behaviour \\
$\boldsymbol{\alpha}_4$ & $\text{MultiNormal}((0,-20,0)^T, diag(3,5,10))$& Mean vector for the Normal distribution describing the observation for the foraging behaviour \\
$\boldsymbol{\sigma}_i$ & $\text{Trunc. MultiNormal}((0,0,0)^T, diag(3,10,3))$& Standard deviation vector for the Normal distribution describing the observation for each behaviour $i=1,2,3,4$\\
$m_1$ & Trunc. normal(0,5)& Mean for the Negative binomial distributions describing the sojourn time of the walk behaviour\\
$m_2$ & Trunc. normal(0,5)& Mean for the Negative binomial distributions describing the sojourn time of the fast walk behaviour\\
$m_3$ &Trunc. normal(80,30)& Mean for the Negative binomial distributions describing the sojourn time of the inactive behaviour\\
$m_4$ & Trunc. normal(25,20)& Mean for the Negative binomial distributions describing the sojourn time of the foraging behaviour\\
$k_i$ & Trunc. normal(0,5)& Dispersion parameter for the Negative binomial distributions describing the sojourn time of each behaviour $i=1,2,3,4$\\
$\beta_i$ & Normal(0,5)& Slope of the autorregresive term for each behaviour $i=1,2,3,4$\\

\end{tabular}
\caption{Prior distributions used to conduct the analysis of the sheep acceleration data}
\label{PriorsReal}
\end{table}

\section{RMSE extra figure}


In order to measure the predictive capacity of the models proposed when analyse the accelerometer sheep data, the Root Mean Squared Error (RMSE) was calculated over the four model: AR(1)-HMM, HMM, AR(1)-HSMM and HSMM. For each observed time series and model, using a 100 sample of the fitted posterior, we computed 100 predictions (using the FB algorithm) of the hidden states and the observation process. Figure \ref{RMSE} shows the values obtained 

\begin{figure}[h]
\centering
\includegraphics[scale = 0.8]{RMSEgray.eps}
\caption{Boxplots of the Root Mean Squared Error over the observations for the four models considered.} 
\label{RMSE}
\end{figure}
